\newtheorem{lemma}{Lemma}
\newtheorem{definition}{Definition}
\begin{document}
\title{Mobility-Aware Resource Allocation  in VLC Networks Using T-Step Look-Ahead Policy}

\author{Mohammad~Amir~Dastgheib,
	Hamzeh~Beyranvand,~\IEEEmembership{Member,~IEEE,}
	Jawad~A.~Salehi,~\IEEEmembership{Fellow,~IEEE}, and~Martin~Maier,~\IEEEmembership{Senior Member,~IEEE}
\thanks{M.~A.~Dastgheib and J.~A.~Salehi are with the Department of Electrical Engineering, Sharif University of Technology, Tehran, Iran (e-mail: sma.dastgheib@ee.sharif.edu, jasalehi@sharif.edu).}
\thanks{H.~Beyranvand is with the Department of Electrical Engineering, Amirkabir University of Technology, Tehran, Iran (e-mail: beyranvand@aut.ac.ir).}%
\thanks{M. Maier is with the Optical Zeitgeist Laboratory, Institut National de la
Recherche Scientifique (INRS), Montreal, QC H5A 1K6, Canada (e-mail:
maier@ieee.org). }%
\thanks{ \textcopyright 2018 IEEE.  Personal use is permitted, but republication/redistribution requires IEEE permission. See \href{http://www.ieee.org/publications standards/publications/rights/index.html}{www.ieee.org} for more information.
	
	DOI: \href{https://doi.org/10.1109/JLT.2018.2872869}{10.1109/JLT.2018.2872869}}
}


\maketitle

\begin{abstract}
Visible light communication (VLC) uses  huge license-free spectral bandwidth of visible light for high-speed wireless communication. Since each VLC access point covers a small area, handovers of mobile users are inevitable. In order to deal with these handovers, developing fast and effective resource allocation algorithms is a challenge. This paper introduces the problem of mobility-aware optimization of resources in VLC networks using the \textit{T-step look-ahead} policy and employing the notion of \textit{handover eff\/iciency}. It is shown that the handover eff\/iciency can correlate the overall performance of the network with future actions based on the mobility of users. Due to the stationary nature of indoor optical wireless channels, future channel state information (CSI) can be predicted by anticipating the future locations of users. A resource allocation algorithm is thus proposed, which uses CSI to dynamically allocate network resources to users. To solve this mathematically intractable problem, a novel relaxation method is proposed, which proves to be a useful tool to develop efficient algorithms for network optimization problems with a proportional fairness utility function. The resulting algorithm is extremely faster than the previous method and awareness of mobility enhances the overall performance of the network in terms of rate and fairness utility function.
\end{abstract}

\begin{IEEEkeywords}
Load balancing, mobility awareness, resource allocation, visible light communication (VLC).
\end{IEEEkeywords}

\IEEEpeerreviewmaketitle

\section{Introduction}
\IEEEPARstart{A}{ccording} to network analytics, about 80\% of wireless data usage is by indoor users \cite{cisco2012indoor}. Thus, increasing the transmission rate of indoor wireless access networks is of paramount importance. Visible light communication (VLC) is a promising solution for next-generation indoor access networks, which leverages illumination and data transmission simultaneously. Utilizing huge license-free spectral bandwidth of visible light, VLC offers massive data rates \cite{tsonev2015rate}, while satisfying energy-eff\/iciency and security requirements of next-generation networks \cite{buzzi2016survey}. Due to its immunity to radio frequency (RF) interference, VLC may be used to effectively offload heavy traff\/ic loads from RF wireless networks.

In recent studies on VLC networks, there has been an interest in evaluating resource allocation algorithms for mobile users \cite{wang2015dynamic,wang2017optimization}. A rather different approach is to assume \textit{a prior} knowledge of mobility in developing new algorithms \cite{chiu2000predictiveblocking,li2016mobility,wang2017mobility,elazzouni2018qos}. Despite the fact that mobility increases the capacity of an ad-hoc wireless network \cite{tse2001mobility}, mobile users may have some defects on the performance of a cellular wireless network. More specif\/ically, mobility may induce handovers, which cause a degradation in users' data rate and result in users' connections being blocked. To deal with mobile users, having some information about users' mobility may help to develop algorithms with an improved performance for mobile networks. For example, the blocking probability can be reduced in a cellular network by sending reservation requests to neighboring access points (APs) based on a user's mobility pattern \cite{chiu2000predictiveblocking}. This approach suggests that a mobility-aware perspective may be useful in developing VLC network algorithms.

\subsection{Related Work}
The mobility pattern of users may be considered a part of \textit{system events} that will happen in the future. Hence, the problem of mobility-aware optimization coincides with the general problem of \textit{T-step look-ahead policy} introduced by Neely in the Lyapunov optimization framework \cite{neely2010universal}. However, since future system events are unknown in general, \cite{neely2010universal} uses the optimum answer to this problem as a benchmark to evaluate a \textit{non-anticipating} algorithm. Following this perspective  \cite{elazzouni2018qos} proposed a mobility-aware resource allocation in the Discounted Rate Utility
Maximization (DRUM) framework \cite{eryilmaz2017discounted}. This algorithm requires the knowledge of time-varying capacities of all radio interfaces and is applicable if the wireless channel is predictable for some specif\/ic duration. The mobility-aware design is also used to obtain cache placement algorithms for Device to Device (D2D) communications, maximizing the data offloading ratio by considering users’ contact
and inter-contact durations \cite{wang2017mobility,wang2018exploiting}. 

In VLC networks, future system events with regard to user mobility can be predicted. The f\/irst study that uses mobility awareness in the resource allocation of VLC networks is \cite{li2016mobility}, which uses a graph-based approach. In \cite{li2016mobility}, the problem of determining the access point that should serve a user is modeled by a college admission problem. The authors showed that using a proper preference function, the users' traff\/ic load can be balanced among APs. In \cite{li2016mobility}, only the knowledge of the movement direction of mobile users has been assumed. Furthermore, the authors of \cite{li2016mobility} neither use a proportional fairness utility function nor solve any optimization problem. Following the general look-ahead policy of \cite{neely2010universal}, an anticipatory algorithm was proposed in \cite{zhang2018anticipatory}, which utilizes the predictability of VLC channel to ensure queue stability by using a Largest Weighted Delay First (LWDF) scheduling technique \cite{stolyar2001largest}. Instead of LWDF or DRUM, in this paper, we formulate the mobility-aware problem using a rather different approach based on handover eff\/iciency. We show that the resulting algorithm enhances the overall network utility even when there are only two VLC APs available.

Developing eff\/icient resource allocation schemes in VLC networks is a challenge that has been widely studied recently. Most of the studies ignore the possibility of using a prior knowledge of users' mobility in developing such algorithms. In this scope, different constraints have been considered in the optimization of load balancing in such networks. In \cite{jin2015resource}, the metric of \textit{effective capacity} \cite{wu2003effective} has been considered  to satisfy given transmission delay constraints. The authors showed that by optimizing a function of the effective capacity, while taking fairness into account, the given stochastic delay constraints are satisf\/ied. The same fairness function has been utilized in \cite{wang2017optimization} to optimize users' data rates. In \cite{wang2015dynamic}, a logarithmic fairness function, which is a special case of the general proportional fairness function used in \cite{wang2017optimization} and \cite{jin2015resource}, has been used to optimize users' data rates. In \cite{wang2015dynamic}, Wang and Haas modeled the handover penalty in terms of handover eff\/iciency, which shows the reduction of a user's data rate due to handover. The resource allocation or load balancing problem has been then formulated using this notion in \cite{wang2015dynamic} and \cite{wang2017optimization}.

There exist different approaches to solve the arising optimization problem of the resource allocation in VLC networks. The proposed optimization algorithm should be fast enough to be able to run online. Hence, major efforts have been devoted to the development of fast algorithms after relaxation of the binary allocation variables. When the fairness function is logarithmic, discretizing the scheduling time period is useful \cite{wang2015dynamic,li2015cooperative}. However, in general, the most widely used approach is the Lagrangian multiplier method \cite{wang2017optimization,jin2015resource,kashef2016energy}. Specif\/ically, the general proportional fairness function is usually applied via a decomposition method \cite{chiang2007decompose}, as done in \cite{wang2017optimization,jin2015resource}, before applying the Lagrangian multiplier method, which is time consuming \cite{wang2017optimization}. The reason is the need for keeping the allocation variables binary at the end of each iteration, which results in many comparisons. In this paper, we propose a novel relaxation method that resolves this issue and increases the speed of the algorithm signif\/icantly.

\subsection{Contributions and Outline}
This paper makes the following three main contributions. First, the problem of mobility-aware optimization of resource allocation is introduced using the notion of handover eff\/iciency. As we will show, this notion correlates the overall performance of the network with future actions based on the mobility of users. We show how to use predicted CSI to form an optimization problem, which enhances the overall network utility. Note that the non-mobility aware scheme is a special case of the proposed general problem. The exact solution to this intractable problem is provided using an exhaustive search. Second, we introduce a novel relaxation method, which is benef\/icial for the general optimization with proportional fairness as the objective function. This method relaxes the mathematically intractable problem into a convex one. Due to the proposed convex relaxation method, the resulting algorithm has a closed form solution in each iteration, which increases the speed of the algorithm signif\/icantly. The last contribution of this paper is the analysis of an upper bound on the service time interval.

For real-time applications, achieving a high data rate for all users is crucial during the whole duration of service times. Hence, in this paper the notion of fairness is extended from \textit{fairness amongst users} to \textit{fairness amongst different users at different service times}. Another way to explain this is in terms of \textit{average utility function}. The overall performance of the network in the long term is more important than its instantaneous performance. Hence, it is benef\/icial to design an algorithm that optimizes the utility function of the network in the long term of the system rather than during short service time intervals. To optimize the long-term performance of VLC networks, our proposed algorithm considers the utility function over future service time intervals. 

In the absence of fading, the optical channel of VLC networks is almost stationary. Hence, the knowledge of the future velocity and position of users can be used to predict the signal to interference and noise ratio (SINR) or the CSI between users and each AP. We utilize this prediction in our optimization problem. To solve the problem, we f\/irst relax the handover eff\/iciency of \cite{wang2017optimization} and replace it with a linear function to render the relaxation of the general problem feasible. Second, the binary allocation variable is relaxed in a novel way suchthat the objective function of the problem is converted into a convex one. Two lemmas provide the necessary and suff\/icient conditions for this relaxation method, leading to convex/concave objective functions. After this relaxation, the standard Lagrangian multiplier method is used to iteratively solve the problem. For a fair comparison, our proposed algorithm is compared with the decomposition algorithm of \cite{wang2017optimization}, which uses the same general fairness utility function. 

The remainder of the paper is organized as follows. Section  \ref{sec:system} describes the system model, including the notion of handover eff\/iciency, and gives an intuitive example to illustrate the benef\/its of using mobility awareness in optimizing network utility. In Section III, we explain our proposed mobility-aware resource allocation in greater detail. Section \ref{sec:mobility} describes the mobility model used in our verifying simulations. An upper bound on service time intervals is also derived in this section. Section \ref{sec:numerical} presents numerical results and evaluates the performance of the proposed algorithm. Finally, Section \ref{sec:conclusion} concludes the paper.
\section{System Model}\label{sec:system}

\subsection{System Overview}
In this paper, an indoor multi-user access environment is considered. This environment is formed by $|\mathcal{A}|$ VLC APs and $|\mathcal{U}|$ users, where $\mathcal{A}$ and $\mathcal{U}$ denote the set of APs and users, respectively. VLC APs utilize light emitting diodes (LED) for communication. Because of the nature of LED light, communication is restricted to a conf\/ined area. Each user, has a photodetector (PD) that is assumed to be oriented perpendicular to the floor.  Time is partitioned into intervals of duration $\tau_p$, referred to as service time. A user connects to one of the VLC  APs in each service time interval for downlink communication. The uplink may be realized by using WiFi or VLC and is not discussed here. Users may move between APs, whereby their movement may initiate a handover from one VLC AP to another, which is inevitable and degrades the performance of the system. All APs are connected to a central control unit (CU), which controls AP assignment and chooses the right time to initiate a handover. A resource allocation algorithm runs prior to each service time interval and determines the interconnections between users and APs in the forthcoming service time interval. 

Further, we assume that a positioning system is available, which is used to measure the position of each mobile user. The measured positions are later used by the CU to predict the position of each user in future service time intervals. A VLC positioning system utilizing the available VLC APs \cite{zabih2017experimental} with an accuracy ``\textit{up to the level of centimeters}" \cite{zhang2014theoretical} is a good candidate for this purpose. The CU uses the predicted positions to determine future CSI.

As an initial work, this paper do not consider the vertical tilting of receiver's PD. However the orientation of users' PD may affect both positioning and handover. Particularly, the achievable throughput from an AP to a specif\/ic user is a function of the orientation of that user's PD \cite{pakravan2001indoor}. As a result, cell's coverage will also be dependent on the orientation of mobile users' PD \cite{ghassemlooy2017visible}. Most mobile devices are equipped with sensors that enable them to calculate orientation of the device. These calculated orientations can help to f\/ind the real achievable rate and correct position of each user, which can be used later by the resource allocation algorithm \cite{soltani2016access}. Rotation of users' receivers may induce additional handovers \cite{soltani2017handover}. While dealing with the effects of PD orientation and rotation is an open research area, the scope of this paper is to investigate the effect of users' mobility. Hence we focus on the mobility and leave the more thorough study of orientation and rotation for future researches.

For each quantity (rate, allocation, etc.), subscripts $\alpha$  and  $\mu$ denote the index of a specif\/ic AP and user, respectively. Superscript $t$ shows the value of the quantity at $t$ service times later than the current service time. Note that $t$ is the index of future service time and hence is unit-less. As an example, the total predicted rate that AP $\alpha$ can provide for user $\mu$ at $t$ service times later is denoted by $R^t_{\mu,\alpha}$. This value is a function of the user's predicted position.

\subsection{VLC Channel Model}\label{sec:channel}
Visible light communication uses intensity modulation at the transmitter. A photo-detector at the receiver detects the modulated signal. Due to the reflection of light from walls, the received optical power is the sum of Line-of-Sight (LoS) and non-LoS components. According to \cite{komine2004fundamental}, the received optical power from the reflections is negligible. Hence, similar to \cite{shao2016joint,li2016mobility}, this paper considers only the LoS path in the channel model. The VLC channel path loss is then obtained as follows \cite{kahn1997wireless,ghassemlooy2017visible}:
\begin{equation}
H_{\mu,\alpha}^t=\left\lbrace\begin{array}{ll}
\frac{(m+1)A}{2\pi D_d^2}\cos^m(\phi)T_s(\psi)g(\psi)\cos(\psi) & 0\leq\psi\leq\Psi_f,\\
0 & \psi>\Psi_f,
\end{array}\right.
\end{equation}
where $m$ is the Lambertian order, which is a function of the half-intensity radiation semi-angle $\Phi_{1/2}$ according to $m=-{1}/{\log_2(\cos(\Phi_{1/2}))}$. Further, $A$ is the physical area of the receiver photo-diode (PD), $D_d$ is the distance between VLC AP and a user's PD, $\Psi_f$ is the receiver Field of View (FoV), $T_s(\psi)$ is the gain of the optical filter and $g(\psi)$ is the concentrator gain, which for a reflection index of $n$ is given by
\begin{equation}
g(\psi)=\left\lbrace\begin{matrix}
\displaystyle\frac{n^2}{\sin^2(\Psi_f)} & 0\leq\psi\leq\Psi_f,\\
0 & \Psi_f\geq\psi.
\end{matrix}\right.
\end{equation} 

As shown in Fig. \ref{fig:angles}, we have $D_d=\sqrt{r^2+h^2}$, $\cos(\phi)=\cos(\psi)=\frac{h}{D_d}$, and $r=\|\vec{y}_\mu^{\,t}-\vec{y}_\alpha\|_2 $, whereby $\vec{y}_\mu^{\,t}$ is the position of user $\mu$ at service time $t$, $\vec{y}_\alpha$ is the position of AP $\alpha$, and $\|\cdot\|_2$ denotes norm-2. Hence, the channel path loss inside FoV of the receiver is as follows:
\begin{equation}
H_{\mu,\alpha}^t=
\frac{(m+1)An^2 h^{m+1}}{2\pi \sin^2(\Psi_f)}\left(\|\vec{y}_\mu^{\,t}-\vec{y}_\alpha\|_2^2+h^2\right)^{-\frac{m+3}{2}}.
\end{equation}
\begin{figure}[t]
	\centering
	\includegraphics{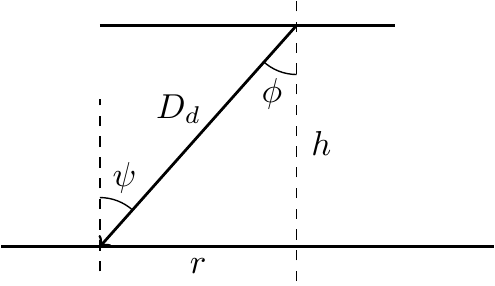}
	\caption{Angle of incidence ($\psi$) and irradiance ($\phi$).}
	\label{fig:angles}
\end{figure}

The SINR of the VLC channel is computed as follows \cite{wang2017optimization}: 
\begin{equation}
\text{SINR}_{\mu,\alpha}^t=\frac{\kappa^2(P H_{\mu,\alpha}^t)^2}{\displaystyle \iota^2NB+\kappa^2\sum_{\alpha'\in S_i}(PH_{\mu,\alpha'}^t)^2}\,,
\end{equation}
where $P$ is the transmitted optical power, $\kappa$ denotes the optical-to-electrical conversion eff\/iciency, $N$ is the power spectral density of receiver noise, $B$ is the bandwidth, and $\iota$ is the ratio of average transmitted optical power and square root of the electric power of signals without DC bias. In case of $\iota=3$, clipping noise is negligible \cite{wang2017optimization}. Further, $S_i$ is the set of APs that may cause interference at the receiver of user $\mu$. The available data rate between user $\mu$ and AP $\alpha$ is then given by the Shannon capacity:
\begin{equation}\label{eq:capacity}
R_{\mu,\alpha}^t=B\log_2(1+\text{SINR}_{\mu,\alpha}^t).
\end{equation}

\subsection{Modeling Handover Overhead}\label{sec:hand}
The notion of handover eff\/iciency was introduced in \cite{wang2015dynamic} and was later used in \cite{wang2017optimization} to represent the negative effect of handover on users' data rate. Assuming $\eta_0$ to be the average handover eff\/iciency, the available data rate between user $\mu$ and AP $\alpha$ can be computed as 
\begin{equation}\label{eq:rate}
r_{\mu,\alpha}^t=\eta_{\mu,\alpha}^t( x_{\mu,\alpha}^{t-1})R_{\mu,\alpha}^t=\left\lbrace\begin{matrix}
\eta_0 R_{\mu,\alpha}^t &  x_{\mu,\alpha}^{t-1}=0,\\
R_{\mu,\alpha}^t & x_{\mu,\alpha}^{t-1}=1,
\end{matrix}\right.
\end{equation}
where $x_{\mu,\alpha}^{t-1}$ is a binary variable that indicates the AP assignment. $x_{\mu,\alpha_0}^{t}$ equals one, if user $\mu$ is connected to AP $\alpha_0$ at service time $t$ and is zero otherwise. Since each user can be connected to one AP at each service time, for each user only one of the binary variables $x_{\mu,\alpha}^{t}$ would be one and others will be zero.
\begin{equation}
x_{\mu,\alpha_0}^{t}=1\Rightarrow x_{\mu,\alpha}^{t}=0,\, \forall \alpha\in\mathcal{A},\alpha\neq\alpha_0
\end{equation}
Note that $x_{\mu,\alpha}^{0}$ is not a variable since it is def\/ined as the AP assignment at the previous service time. The above handover eff\/iciency can be expressed as a linear function of $x_{\mu,\alpha}^{t-1}$ as follows: 
\begin{equation}\label{eq:efficiency}
\eta_{\mu,\alpha}^t(x_{\mu,\alpha}^{t-1})=(1-\eta_0)x_{\mu,\alpha}^{t-1}+\eta_0,
\end{equation}
which has the same value as (\ref{eq:rate}) for $x_{\mu,\alpha}^{t-1}=0,1$. This linear function will be used as a relaxation tool in Section \ref{sec:allocation}.

Ping-pong effect, i.e. alternating handover initiation between adjacent APs, is an important issue in cellular communications. In VLC networks, due to the stationary nature of indoor optical wireless channel, the ping-pong effect is mostly the result of occasional changes in the orientation of PD \cite{soltani2017handover} or blocking of LoS channel by obstacles \cite{wang2015efficient}.  In such cases there exists a trade-off between \textit{waiting} for a certain time for possible link recovery or \textit{switching} to another AP to prevent communication breakdown. In this study we rely on optimization to deal with handovers. The notion of handover efficiency  reduce the preference of resource allocation algorithm to alternate users between adjacent APs due to decrease in rate. Hence, it may reduce the ping-pong effect.
\subsection{Intuitive Rationale Behind Using Mobility Awareness}
As an illustrative example, consider a simple 2-user scenario, in which 2 users are assumed to be associated with AP1 at their previous service time. The scenario along with the users' mobility path are depicted in Fig. \ref{fig:intuition}. Each user is moving toward its neighboring AP. There exist 2 zones in each AP. For simplicity, the available data rate of the APs is approximated as follows:
\begin{figure}[bt]
	\centering
	\includegraphics[width=0.5\textwidth]{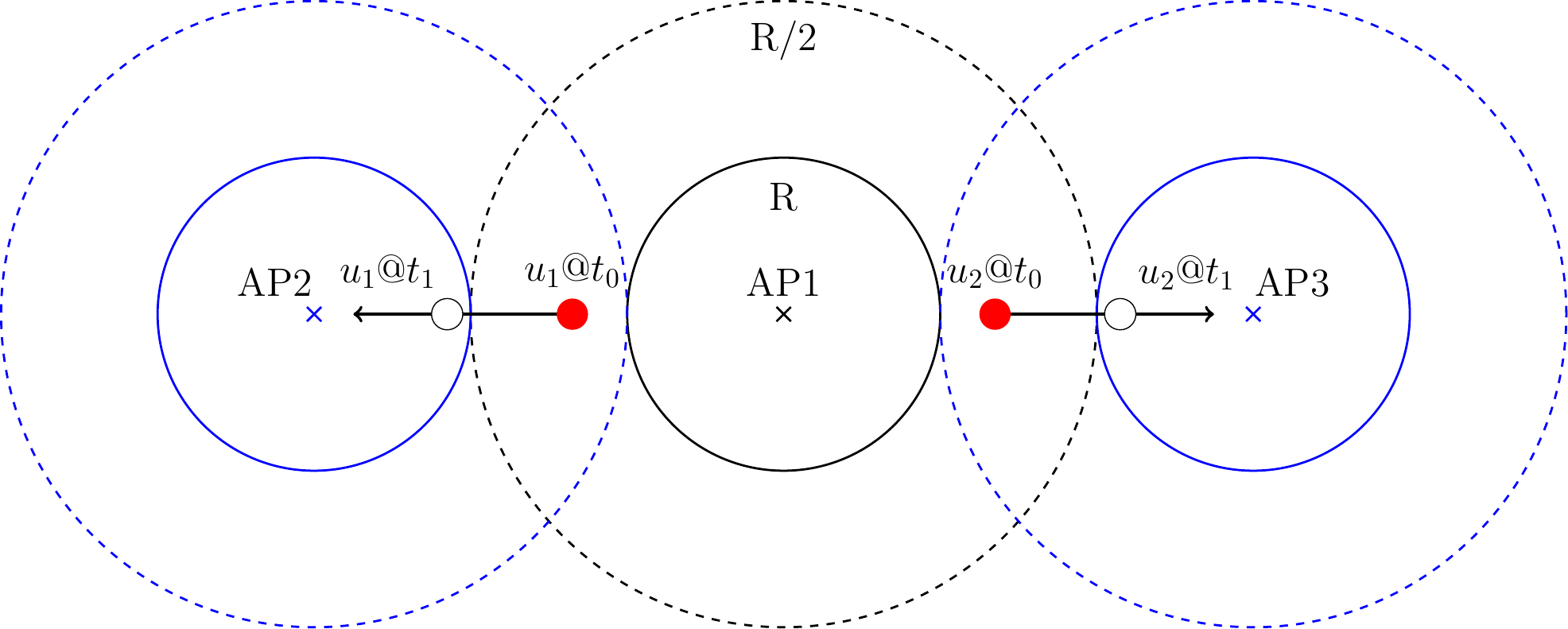}
	\caption{A simple 2-user example illustrating the intuition behind the presented mobility-aware approach.}
	\label{fig:intuition}
\end{figure}
\begin{itemize}
	\item In the zone, which is nearer to the AP (inside the smaller circle), users may be served with a total rate equal to $R$.
	\item In the second zone (between the circles), users may be served by a lower rate, say $R/2$, due to increased distance.
\end{itemize}
  
\begin{table}[bt]
	\caption{Possible Choices for Handover Initiation}
	\centering
	\begin{tabular}{|c|c|c|c|c|}
		\hline
		\multicolumn{2}{|c|}{1st service-time} & \multicolumn{2}{c|}{2nd service-time}& \\
		\hline
		\# & Rate & \# & Rate & Average\\
		\hline
		&&&&\\[-1em]
		0 & $\frac{R}{4}+\frac{R}{4}$ & 2 & $\eta_0R+\eta_0R$ & $\frac{1}{2}\left(\frac{1}{2}+2\eta_0\right)R$\\
		&&&&\\[-1em]
		1 & $\eta_0\frac{R}{2}+\frac{R}{2}$& 1 & $R+\eta_0 R$ & $\frac{1}{2}\left(\frac{3}{2}+\frac{3}{2}\eta_0\right)R$\\
		&&&&\\[-1em]
		2 & $\eta_0\frac{R}{2}+\eta_0\frac{R}{2}$& 0 & $R+ R$ & $\frac{1}{2}(2+\eta_0)R$\\
		\hline
		\multicolumn{5}{l}{(\# denotes the number of handovers at the corresponding service time.)}
	\end{tabular}\label{tab:intuition}
\end{table}
In the considered example, the objective is to maximize the users' aggregate data rate. The f\/irst (current) service time starts at $t_0$ and the second one starts at $t_1$.  Since both users are outside the coverage area of AP1 at the second service time, each user should be disconnected from AP1 and connect to another AP at the f\/irst or second service time. Thus, the algorithm has to make choice such that for each user a handover should be initiated at the f\/irst or second service time. There are 3 possible choices summarized in Table \ref{tab:intuition}. Since each user should experience a handover at one of the aforementioned service times, each choice at the f\/irst service time enforces the allocation of the second one. Toward this end, two different schemes are considered:
\subsubsection{Mobility-unaware scheme}
In this scheme, the algorithm is unaware of users' mobility. Hence, it opts to maximize the data rate at the current service time. In this approach, the second scenario (row 2) of Table \ref{tab:intuition} is selected due to the higher data rate at f\/irst service time, in which the average data rate is given by
$$R_{\text{avg}}=\frac{1}{2}\left(\frac{3}{2}+\frac{3}{2}\eta_0\right)R.$$

\subsubsection{Mobility-aware scheme}
In the second scheme, the algorithm knows in which region users $u_1$ and $u_2$ are and to which APs they can be connected to be served at which rates at each service time. Thus, it optimizes the aggregate data rate over several service times. This means that the algorithm looks at the \textit{Sum} column of Table \ref{tab:intuition} and chooses the handover strategy that maximizes the aggregate data rate, which is the objective in this simplif\/ied example. Here, initiating the handover at time $t_0$ for both users is the optimum strategy, if the prediction is done only for one future service time, resulting in overall average throughput given by
$$R_{\text{avg}}=\frac{1}{2}(2+\eta_0)R,$$
which is higher than that of the mobility-unaware scheme. 

The intuitive rationale behind this increase is that the f\/irst algorithm is unaware of future movements. Since users are moving toward an AP, which provides them with higher data rates in future, it is better to associate them with these APs sooner. 

The above 2 zone approximation is only for the purpose of illustrating the rationale behind using mobility awareness and the rest of the paper is based on the exact model of VLC channel given in subsection \ref{sec:channel}. Indeed the approximation can become more accurate by considering more than 2 zones in each AP.

Analogous to the explained simplified example, we may obtain graphs showing the real evolution of users' rates over time. These graphs are plotted in Fig. \ref{fig:intuition2}. Each plot is equivalent to one column of Table \ref{tab:intuition} and different curves in each of these plots equals different rows of Table \ref{tab:intuition}. To obtain the plots we rely on the definition of handover efficiency. Due to the signaling overhead in case of handover initiation, the user's service will be interrupted for a certain amount of time \cite{van2012handoff}. As discussed earlier in subsection \ref{sec:hand}, this process can be modeled using handover efficiency in terms of a reduction in the user's average rate. The real curve of user's rate vs. time can be plotted by considering the service interruption time. For example if we consider $\eta_0=0.75$, then in case of handover initiation, in the first 25\% of a service time user's service will be interrupted. Using this method we may obtain the curves in Fig. \ref{fig:intuition2}.  The average rate corresponding to each handover scenario is calculated and reported in Table \ref{tab:intuition2}. Again we may observe that mobility-unaware scheme will select the middle row while the mobility-aware scheme will select the last row resulting in higher overall average rate for users.      

\begin{figure}
	\begin{subfigure}{.45\linewidth}
		\centering
		\includegraphics[width=\linewidth]{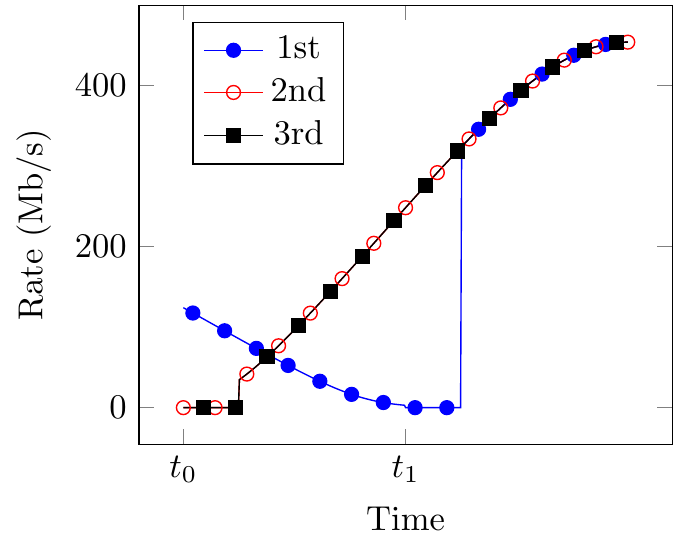}
		\caption{Rate of user 1}
		\label{fig:sub1}
	\end{subfigure}%
	\begin{subfigure}{.45\linewidth}
		\centering
		\includegraphics[width=\linewidth]{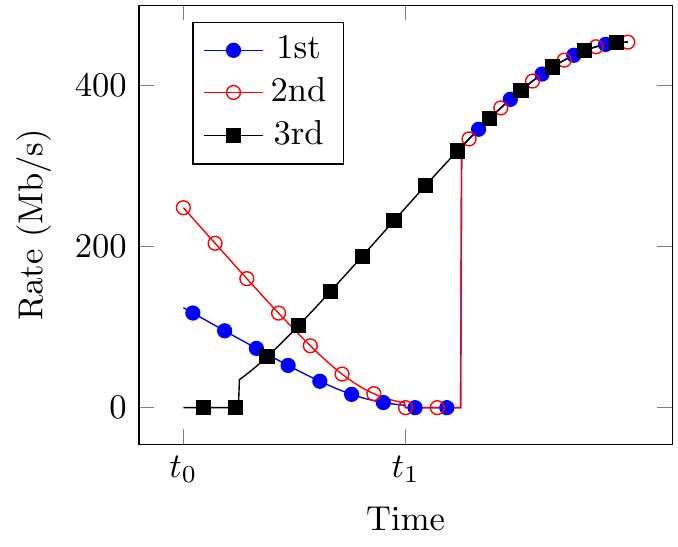}
		\caption{Rate of user 2}
		\label{fig:sub2}
	\end{subfigure}\\[1ex]
	\begin{subfigure}{0.9\linewidth}
		\centering
		\includegraphics[width=0.8\linewidth]{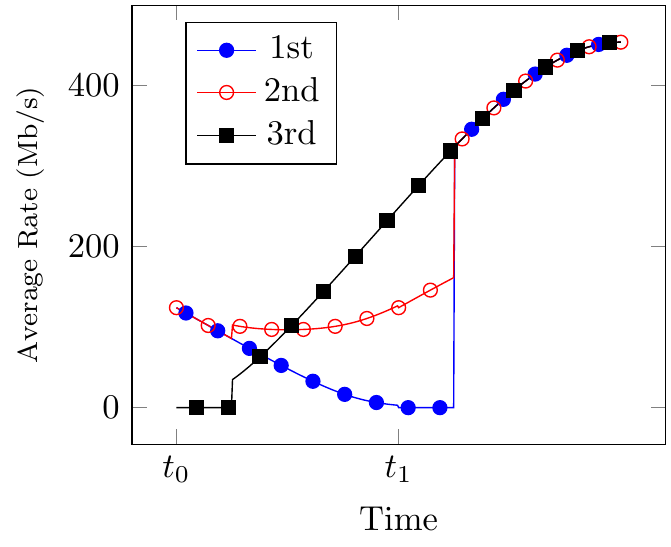}
		\caption{Average of rates}
		\label{fig:sub3}
	\end{subfigure}
	\caption{Effect of different handover scenarios on users' rate with the exact model for VLC channel using the notion of handover efficiency.}
	\label{fig:intuition2}
\end{figure}

\begin{table}[bt]
	\caption{Possible Choices for Handover Initiation using Exact Channel Model}
	\centering
	\begin{tabular}{|c|c|c|c|c|}
		\hline
		\multicolumn{2}{|c|}{1st service-time} & \multicolumn{2}{c|}{2nd service-time}& \\
		\hline
		\# & Rate (Mb/s) & \# & Rate (Mb/s) & Average (Mb/s)\\
		\hline
		&&&&\\[-1em]
		0 & 106.5 & 2 & 609.9 & 358.8\\
		&&&&\\[-1em]
		1 & 207.6 & 1 & 681.7 & 445.3\\
		&&&&\\[-1em]
		2 & 202.3 & 0 & 753.6 & 478.6\\
		\hline
	\end{tabular}\label{tab:intuition2}
\end{table}

\section{Optimal Resource Allocation}
\label{sec:allocation}

\subsection{Problem Formulation}
Choosing a proper utility function is crucial in the optimization of networks. A widely used utility function is $\beta$-proportional fairness and for $\beta\neq 1$ it is def\/ined as \cite{mo2000fair}
\begin{equation}\label{eq:beta_fair}
\psi_\beta(r)=\frac{r^{1-\beta}}{1-\beta}=\frac{-1}{(\beta-1)r^{\beta-1}},
\end{equation}
where $r$ denotes the data rate. This function has a positive value for $\beta<1$ and a negative value for $\beta>1$.

Using the utility function of (\ref{eq:beta_fair}) and the relaxed notion of handover eff\/iciency in (\ref{eq:efficiency}), the optimization problem can be written as follows:
\begin{subequations}
\begin{alignat}{2}
&\max_{x_{\mu,\alpha}^t,p_{\mu,\alpha}^t} \quad \sum_{\mu}\sum_{\alpha}\sum_{t}x_{\mu,\alpha}^t\psi(p_{\mu,\alpha}^t r_{\mu,\alpha}^t)\tag{10}\label{eq:mobilityaware}\\
\text{s.t.}&\; \,r_{\mu,\alpha}^t=\left((1-\eta_0)x_{\mu,\alpha}^{t-1}+\eta_0\right)R_{\mu,\alpha}^t \;\text{{\small  $\forall \mu\in\mathcal{U},\forall \alpha\in\mathcal{A}, t\geq 2;$}}\label{eq:mobilityawarerate}\\
&\; \,\sum_{\mu\in\mathcal{U}}x_{\mu,\alpha}^t p_{\mu,\alpha}^t\leq 1 \qquad \forall \alpha\in\mathcal{A}, \,\,t\in\mathbb{N};\label{eq:mobilityawarexp}\\
&\;\,\sum_{\alpha\in \mathcal{A}}x_{\mu,\alpha}^t=1 \qquad\quad\ \; \forall \mu\in\mathcal{U},\,\,t\in\mathbb{N};\label{eq:mobilityawarex}\\
&\;\, x_{\mu,\alpha}^t\in\lbrace 0,1\rbrace,p_{\mu,\alpha}^t\in\left(0,1\right]\quad\forall \alpha\in\mathcal{A},\,\mu\in\mathcal{U},\,t\in\mathbb{N}.\nonumber
\end{alignat}
\end{subequations}
where $\mathbb{N}=\lbrace 1,2,\ldots\rbrace$ and $p_{\mu,\alpha}^t$ is the proportion of resources that are allocated from AP $\alpha$ to user $\mu$ at $t$-th future service time.

In the above optimization problem, constraint (\ref{eq:mobilityawarexp}) states that each access point can't allocate more than all of its resources and constraint (\ref{eq:mobilityawarex}) states that each user should only be  assigned to one access point. This optimization problem has an infinite number of optimization variables and hence it is impossible to be solved. Since future rates should be predicted before solving the problem, the prediction error would be high for large values of $t$. Hence, $t$ shouldn't go to inf\/inity and should be truncated at some step, say, $t=T$. The parameter $T$ is called \textit{prediction level} in this paper, since it shows how many future service times should be predicted for the algorithm. Similar to $t$, $T$ is unit-less.

The optimization in (\ref{eq:mobilityaware}) is a Mixed Integer Non-Linear Programming (MINLP) problem, which is mathematically intractable.

\subsection{Exhaustive Search}
The exact solution to the MINLP problem may be found using exhaustive search, which is impractical and time consuming. The search should be done on the vector of binary variables. Let $\mathbf{x}$ be the vector of all binary variables at different service times, i.e. $x_{\mu,\alpha}^t\forall \mu\in\mathcal{U},\forall \alpha\in\mathcal{A}, 1\leq t\leq T$, whose elements show the user-AP interconnections. Each binary vector $\mathbf{x}$, called an assignment, corresponds to a certain assignment of users to APs. Then a feasible assignment may be def\/ined as follows.
\begin{definition}[feasible AP assignment]
	 A feasible assignment is def\/ined by a vector $\widehat{\mathbf{x}}$, for which the constraint in (\ref{eq:mobilityawarex}) is satisf\/ied for $1\leq t\leq T$.
\end{definition}
Let $\mathcal{X}$ be the set of all feasible assignments. It is obvious that the search should be done only on feasible assignments. Therefore, at the f\/irst step, the exhaustive search algorithm generates a feasible assignment, namely $\widehat{\mathbf{x}}\in\mathcal{X}$. For this assignment the continuous optimization variables should be determined. This can be done by solving the following optimization problem derived from (\ref{eq:mobilityaware}), keeping in mind that the feasibility of assignment condition is already satisf\/ied.
\begin{alignat}{2}
&\max_{p_{\mu,\alpha}^t} \quad \sum_{\mu}\sum_{\alpha}\sum_{t}\hat{x}_{\mu,\alpha}^t\psi(p_{\mu,\alpha}^t \hat{r}_{\mu,\alpha}^t)\label{eq:mobilityaware2}\\
\text{s.t.}&\; \sum_{\mu\in\mathcal{U}}\hat{x}_{\mu,\alpha}^t p_{\mu,\alpha}^t\leq 1 \qquad \forall \alpha\in\mathcal{A}, \,\,t\in\mathbb{N}^T;\nonumber\\
&\;\, p_{\mu,\alpha}^t\in\left(0,1\right]\quad\forall \alpha\in\mathcal{A},\,\mu\in\mathcal{U},\,t\in\mathbb{N}^T.\nonumber
\end{alignat}
Note that after setting AP assignment, there is no interconnection between future and current optimization variables. Hence the above optimization will be separated into $T$ disjoint optimizations. Since the $\beta$-proportional fairness function is concave, these optimizations are concave optimization problems, which can be solved using the Lagrangian multiplier method. For example, for $\beta>1$, taking into account the fact that objective function is negative, we may write the equivalent convex optimization as:
\begin{alignat}{2}
&\min_{p_{\mu}^t} \sum_{\mu}\sum_{\alpha}\frac{\hat{x}_{\mu,\alpha}^t}{\left(p_{\mu,\alpha}^t \hat{r}_{\mu,\alpha}^t \right)^{\beta-1}}\label{eq:oneAP}\\
&\text{s.t.}\;  \sum_{\mu\in\mathcal{U}}\hat{x}_{\mu,\alpha}^t p_{\mu,\alpha}^t\leq 1 \qquad \forall \alpha\in\mathcal{A};\nonumber\\
&\;\quad\, p_{\mu,\alpha}^t\in\left(0,1\right]\quad\forall \alpha\in\mathcal{A},\,\mu\in\mathcal{U}.
\end{alignat}
Since the objective function is convex and the constraints are aff\/ine, the above problem is a convex optimization problem. The solution to this can be found using Lagrangian multiplier method which is given in \cite{wang2017optimization} as:
\begin{equation}\label{eq:exhpOpt}
p_{\mu,\alpha,\hat{\mathbf{x}}}^{t^*}=\dfrac{\hat{r}_{\mu,\alpha}^{\frac{1}{\beta}-1}}{\displaystyle\sum_{\nu\in\mathcal{U}}\hat{x}_{\nu,\alpha}^t\hat{r}_{\nu,\alpha}^{\frac{1}{\beta}-1} }.
\end{equation}
Then the optimal utility function would be
\begin{equation}
  U_{\hat{\mathbf{x}}}^*=\sum_{\mu}\sum_{\alpha}\sum_{t}\hat{x}_{\mu,\alpha}^t\psi(p_{\mu,\alpha,\hat{\mathbf{x}}}^{t^*} \hat{r}_{\mu,\alpha}^t).
\end{equation}
After calculating the optimal utility for all possible feasible assignments, the assignment that maximizes the utility will be the optimal assignment and the output of the exhaustive search. 
\begin{equation}\label{eq:exhXopt}
\mathbf{x}^*=\arg\max_{\hat{\mathbf{x}}\in\mathcal{X}} U_{\hat{\mathbf{x}}}^*.
\end{equation}

\begin{algorithm}[!htb]
	\caption{Exhaustive Search}\label{alg:exhuast}
	\begin{algorithmic}[1]
		\Statex \textbf{Input}
		\Statex $T$: The total number of future service times influencing current service time.
		\Statex $y_{\mu}^t$: Position of user $\mu$ at service time $0\leq t\leq T$.
		\ForAll {$\mu\in\mathcal{U}$,$\alpha\in\mathcal{A}$,$1\leq t\leq T$}
		\State CU calculates $R^t_{\mu,\alpha}$.
		\EndFor
		\ForAll {$\hat{\mathbf{x}}\in\mathcal{X}$}
		\State Find optimal $\mathbf{p}_{\hat{\mathbf{x}}}^*$ using (\ref{eq:exhpOpt}).
		\State Calculate  $U_{\hat{\mathbf{x}}}^*$
		\EndFor
		\State Find $\mathbf{x}^*$ according to (\ref{eq:exhXopt}).
		\State  $\mathbf{p}^*=\mathbf{p}_{\mathbf{x}^*}^*$.
		\Statex \textbf{Output}: $\mathbf{x}^*,\mathbf{p}^*$ at $t=1$.
	\end{algorithmic}
\end{algorithm}

Algorithm \ref{alg:exhuast} summarizes the procedure.
 \subsection{Convex Relaxation}
An approximate solution can be obtained by means of relaxation of $x_{\mu,\alpha}^t$. In order to relax the problem, in \cite{wang2017optimization,jin2015resource} the binary variable $x_{\mu,\alpha}^t$ is replaced with a real variable $x_{\mu,\alpha}^t$, taking values in $\left[0,1\right]$. In this paper, however, we apply a more general approach by replacing $x_{\mu,\alpha}^t$ with $h(x_{\mu,\alpha}^t)$. It is intended that the value of the objective function doesn't change with this replacement of the binary variable $x_{\mu,\alpha}^t$. Given this condition on $h(\cdot)$, the relaxed problem is equivalent to the original problem for binary values of $x_{\mu,\alpha}^t$. To have the same value for $h(x)$ and $x$ at $x=0,1$, we have
\begin{equation}
h(x)=\left\lbrace\begin{matrix}
1 & x=1,\\
0 & x=0.
\end{matrix}\right.
\end{equation}
The function $h(x)=x^a,a>0$, satisf\/ies the above condition. 

For $\beta>1$ the objective function is always negative. Hence, the maximization problem of (\ref{eq:mobilityaware}) is equivalent to the \textit{minimization} of absolute value of the objective function, which with proper choice of $a$ will be a convex one. The necessary and suff\/icient condition on $a$, assuming $\beta>1$, is given in lemma \ref{lem:convex}. The  truncated relaxed optimization problem for $\beta>1$ is given by
\begin{alignat}{2}
&\min_{x_{\mu,\alpha}^t,p_{\mu,\alpha}^t} \sum_{\mu}\sum_{\alpha}\sum_{t}\frac{h(x^t_{\mu,\alpha})}{\ \left(r_{\mu,\alpha}^t p_{\mu,\alpha}^t \right)^{\beta-1}}\label{eq:relaxedfinal}\\
&\text{s.t.}\; \,r_{\mu,\alpha}^t=\left((1-\eta_0)x_{\mu,\alpha}^{t-1}+\eta_0\right)R_{\mu,\alpha}^t \; \forall \mu\in\mathcal{U},\forall \alpha\in\mathcal{A}, t\in\mathbb{N}^T,\nonumber\\
&\;\quad\,\sum_{\mu\in\mathcal{U}}x_{\mu,\alpha}^t p_{\mu,\alpha}^t\leq 1 \qquad \forall \alpha\in\mathcal{A}, \,\,t\in\mathbb{N}^T,\nonumber\\
&\;\quad\,\sum_{\alpha\in \mathcal{A}}x_{\mu,\alpha}^t=1 \qquad\quad\ \; \forall \mu\in\mathcal{U},\,\,t\in\mathbb{N}^T,\nonumber\\
&\;\quad\, x_{\mu,\alpha}^t\in[ 0,1],p_{\mu,\alpha}^t\in\left(0,1\right]\quad\forall \alpha\in\mathcal{A},\,\mu\in\mathcal{U},\,t\in\mathbb{N}^T,\nonumber
\end{alignat}
where $\mathbb{N}^T=\lbrace 1,2,\ldots,T\rbrace$.

\begin{lemma}\label{lem:convex}
The objective function of the relaxed problem described in (\ref{eq:relaxedfinal}) with $h(x)=x^a$ is convex for $\beta>1$ iff 
$$\left\lbrace\begin{matrix}
a\geq & \beta & t=1,\\
a\geq &2\beta-1 & t>1.
\end{matrix}\right.$$
\end{lemma}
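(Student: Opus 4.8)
The plan is to reduce the joint convexity of the whole objective to the convexity of each summand, and then to reduce each summand's convexity to the positive semidefiniteness (PSD) of a single \emph{constant} matrix whose minors reproduce the two stated bounds. First I would write the objective of (\ref{eq:relaxedfinal}) as a sum of terms
\[
g_{\mu,\alpha}^t = \frac{(x_{\mu,\alpha}^t)^a}{\left(r_{\mu,\alpha}^t\, p_{\mu,\alpha}^t\right)^{\beta-1}},
\]
so that \emph{sufficiency} is immediate once each term is shown convex, since a sum of convex functions is convex. I would split the termwise analysis into the two cases of the claim. For $t=1$ the rate $r_{\mu,\alpha}^1$ is a positive constant, because it depends only on the fixed assignment $x_{\mu,\alpha}^{0}$; hence the term is a constant multiple of $y^a p^{-(\beta-1)}$ in the two variables $y=x_{\mu,\alpha}^1$ and $p=p_{\mu,\alpha}^1$. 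For $t>1$ the rate depends on $x_{\mu,\alpha}^{t-1}$ through the affine map $w=(1-\eta_0)x_{\mu,\alpha}^{t-1}+\eta_0$, which takes values in $[\eta_0,1]\subset(0,\infty)$; since an affine change of variables preserves convexity, the term may be treated as a constant multiple of $y^a w^{-(\beta-1)} p^{-(\beta-1)}$ in the three variables $y=x_{\mu,\alpha}^t$, $w$ and $p=p_{\mu,\alpha}^t$.

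The computational device I would use to avoid a messy Hessian is the identity $\nabla^2 g = g\left(\nabla\ln g\,(\nabla\ln g)^\top + \nabla^2\ln g\right)$, valid on the open domain where $g>0$. Because $\ln g$ is a linear combination of $\ln y$, $\ln w$ and $\ln p$, its Hessian $\nabla^2\ln g$ is diagonal, and the whole bracket factors as $D\,\tilde M\,D$, where $D$ is the positive diagonal matrix $\mathrm{diag}(1/y,1/w,1/p)$ and $\tilde M$ is a \emph{constant} symmetric matrix. Since $g>0$ and $D\succ 0$, the term is convex iff $\tilde M\succeq 0$, and because $\tilde M$ carries no dependence on the evaluation point this is a single algebraic condition. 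For $t=1$ this yields the $2\times2$ matrix
\[
\tilde M=\begin{pmatrix} a(a-1) & -a(\beta-1)\\ -a(\beta-1) & (\beta-1)\beta \end{pmatrix},
\]
and for $t>1$ the analogous $3\times3$ constant matrix with diagonal $\big(a(a-1),\,(\beta-1)\beta,\,(\beta-1)\beta\big)$ and the corresponding off-diagonal couplings.

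I would then read the bounds off the principal minors. For $t=1$ the diagonal gives $a\ge 1$, and the determinant equals $a(\beta-1)$ times $a-\beta$, so $\tilde M\succeq 0 \iff a\ge\beta$. For $t>1$ the diagonal and the $2\times2$ minors give $a\ge 1$ and $a\ge\beta$, while the $3\times3$ determinant factors as a positive multiple of $a-(2\beta-1)$; since $2\beta-1>\beta>1$ whenever $\beta>1$, the binding requirement is exactly $a\ge 2\beta-1$, which subsumes the weaker ones. Together these give both lines of the claim, with the overall objective being convex precisely when $a\ge\beta$ holds for the $t=1$ terms and $a\ge 2\beta-1$ for the rest.

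The part I expect to be the real obstacle is the \emph{necessity} direction, because the summands are not variable-disjoint: $x_{\mu,\alpha}^{t-1}$ appears in the numerator of the $(t-1)$ term and, through $w$, in the denominator of the $t$ term, so the full Hessian couples consecutive terms along a chain and is not simply block diagonal. To recover termwise necessity I would exploit that $p_{\mu,\alpha}^t$ is \emph{private} to the $t$ term and restrict the Hessian to the principal submatrix indexed by that single term's variables; the only foreign contribution then lands on the $x_{\mu,\alpha}^{t-1}$ diagonal entry, coming from the $(t-1)$ term, and it is nonnegative, so it could in principle mask a negative direction. I would neutralize it by evaluating at a point where that contribution is negligible — sending the neighbouring term's numerator variable toward zero, and choosing its free resource variable and channel gain so its lone diagonal contribution is arbitrarily small — so that the restricted submatrix reproduces $\tilde M$ up to an arbitrarily small perturbation. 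By continuity, $\tilde M\not\succeq 0$ then forces a genuinely negative second directional derivative, establishing non-convexity whenever a bound is violated. Verifying that this isolation is always realizable within the feasible domain is the delicate point of the argument; the rest is the elementary minor computation above.
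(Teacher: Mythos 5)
Your termwise computation is right and lands on exactly the paper's conditions: the paper (Appendix A) also reduces each summand to a monomial $x^ay^bz^c$, computes the Hessian, and imposes nonnegativity of all principal minors, obtaining $ab(a+b-1)\le 0$ and $abc(a+b+c-1)\ge 0$, which with $b=1-\beta$, $c=0$ (resp.\ $c=1-\beta$) give $a\ge\beta$ for $t=1$ and $a\ge 2\beta-1$ for $t>1$. Your factorization $\nabla^2 g = g\,D\tilde M D$ with a constant $\tilde M$ is a tidier way to organize the same minors, and your $2\times 2$ and $3\times 3$ determinants agree with the paper's.

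The genuine divergence, and the gap, is in how you set up the variables. You substitute the rate constraint into the objective, so that term $t$ depends on $(x^t_{\mu,\alpha},x^{t-1}_{\mu,\alpha},p^t_{\mu,\alpha})$ and consecutive terms share $x^{t-1}$; this is what forces your delicate ``isolation'' argument for necessity. That argument as sketched does not go through in general: the foreign diagonal contribution from term $t-1$ is $a(a-1)(x^{t-1})^{a-2}(r^{t-1}p^{t-1})^{-(\beta-1)}$, and since $p\le 1$ and $R^{t-1}$ is fixed problem data, the factor $(r^{t-1}p^{t-1})^{-(\beta-1)}$ is bounded \emph{below} by a positive constant, while sending $x^{t-1}\to 0$ makes $(x^{t-1})^{a-2}$ blow up whenever $1<a<2$ (which is compatible with $a\ge 2\beta-1$ for $\beta$ close to $1$). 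Moreover, when $\beta\le a<2\beta-1$ all $2\times 2$ principal minors of the term-$t$ block are still nonnegative, so every negative direction of that block has a nonzero $x^{t-1}$-component and is therefore fought by the neighbouring term's positive contribution; whether the sum is still non-PSD then depends on the data, so the ``only if'' is not established on your route. The paper sidesteps all of this: in problem (\ref{eq:relaxedfinal}) and in the Lagrangian (\ref{eq:Lagrangian1}), $r^t_{\mu,\alpha}$ is an independent optimization variable tied to $x^{t-1}_{\mu,\alpha}$ only through a separate affine constraint, so the objective's summands have pairwise disjoint variable triples $(x^t,r^t,p^t)$, the Hessian is block diagonal, and necessity of the termwise conditions is immediate. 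If you adopt that reading (which is the one the lemma refers to), your entire last paragraph becomes unnecessary and your proof closes.
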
	
\begin{proof}
See appendix \ref{ap:convex}.
\end{proof}

Analogous to the convexity constraints of lemma \ref{lem:convex}, which is used throughout this paper, the following lemma states the required constraints for the case of $0.5\leq\beta<1$:
\begin{lemma}\label{lem:concave}
	The relaxed objective function given by $$ h(x^t_{\mu,\alpha})\left(r_{\mu,\alpha}^t p_{\mu,\alpha}^t \right)^{1-\beta}$$ with $h(x)=x^a$ is concave for $0.5\leq\beta<1$ iff 
	$$\left\lbrace\begin{array}{ll}
	a\leq\beta & t=1,\\
	a\leq 2\beta-1 & t>1.
	\end{array}\right.$$
\end{lemma}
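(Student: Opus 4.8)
The plan is to reduce the statement to the single summand $h(x_{\mu,\alpha}^t)(r_{\mu,\alpha}^t p_{\mu,\alpha}^t)^{1-\beta}$ and, after substituting the handover-efficiency identity (\ref{eq:efficiency}), to recognize it as a constant multiple of a product of powers of positive variables. Writing $a>0$ for the exponent in $h(x)=x^a$ and noting $1-\beta>0$ because $\beta<1$, the term for $t=1$ is $\mathrm{const}\cdot (x_{\mu,\alpha}^1)^a (p_{\mu,\alpha}^1)^{1-\beta}$, a function of the two variables $(x_{\mu,\alpha}^1,p_{\mu,\alpha}^1)$, since $r_{\mu,\alpha}^1$ depends only on the fixed previous assignment $x_{\mu,\alpha}^0$ and is therefore a positive constant. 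For $t>1$, writing $u=(1-\eta_0)x_{\mu,\alpha}^{t-1}+\eta_0$, the term is $\mathrm{const}\cdot (x_{\mu,\alpha}^t)^a\, u^{1-\beta}(p_{\mu,\alpha}^t)^{1-\beta}$, a function of $(x_{\mu,\alpha}^t,x_{\mu,\alpha}^{t-1},p_{\mu,\alpha}^t)$. Because the full objective of (\ref{eq:relaxedfinal}) is a sum of such summands and concavity is preserved under summation, the content of the lemma is exactly this per-summand characterization.

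The key tool I would isolate and prove is the classical criterion: for positive variables $z_1,\dots,z_n>0$ and exponents $c_1,\dots,c_n\ge 0$, the product $f(z)=\prod_i z_i^{c_i}$ is concave iff $\sum_i c_i\le 1$. I would establish it by computing the Hessian in the compact form $\nabla^2 f = f\,D\big(cc^{\top}-\operatorname{diag}(c)\big)D$, where $c=(c_1,\dots,c_n)^{\top}$ and $D=\operatorname{diag}(1/z_i)$; since $f>0$ and $D\succ 0$, negative semidefiniteness of $\nabla^2 f$ is equivalent to that of $M=cc^{\top}-\operatorname{diag}(c)$, for which $v^{\top}Mv=(c^{\top}v)^2-\sum_i c_i v_i^2$. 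Cauchy--Schwarz gives $(c^{\top}v)^2\le(\sum_i c_i)(\sum_i c_i v_i^2)$, so $\sum_i c_i\le 1$ forces $M\preceq 0$ (sufficiency); conversely $v=\mathbf 1$ yields $v^{\top}Mv=(\sum_i c_i)(\sum_i c_i-1)$, which is strictly positive once $\sum_i c_i>1$ (necessity).

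Applying this criterion then finishes the proof. For $t=1$ the exponents are $a$ and $1-\beta$, so concavity holds iff $a+(1-\beta)\le 1$, i.e. $a\le\beta$. For $t>1$ the exponents in $(x_{\mu,\alpha}^t,u,p_{\mu,\alpha}^t)$ are $a,\,1-\beta,\,1-\beta$, so concavity in those coordinates holds iff $a+2(1-\beta)\le 1$, i.e. $a\le 2\beta-1$. To transfer this back to the original variables I would invoke that concavity is invariant under affine reparametrization: the map $(x_{\mu,\alpha}^t,x_{\mu,\alpha}^{t-1},p_{\mu,\alpha}^t)\mapsto(x_{\mu,\alpha}^t,u,p_{\mu,\alpha}^t)$ has Jacobian $\operatorname{diag}(1,1-\eta_0,1)$, which is invertible because the slope $1-\eta_0\neq 0$, and $u$ stays in $[\eta_0,1]\subset(0,\infty)$ on the feasible box; hence $\nabla^2$ in the two coordinate systems are congruent and both directions of the equivalence carry over unchanged.

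The step I expect to require the most care is the necessity direction together with this affine transfer: I must verify that replacing $x_{\mu,\alpha}^{t-1}$ by the shifted, scaled variable $u$ does not silently discard a direction that breaks concavity, which is precisely where the nonvanishing slope $1-\eta_0$ and the positivity of $u$ on the domain are used. A secondary point worth flagging is the endpoint $\beta=\tfrac12$: there the $t>1$ bound becomes $a\le 2\beta-1=0$, incompatible with $a>0$, so at that endpoint only the $t=1$ terms admit an admissible exponent; this is consistent with the half-open hypothesis $0.5\le\beta<1$ and mirrors Lemma \ref{lem:convex} with every inequality reversed, reflecting the sign flip of $1-\beta$ as $\beta$ crosses $1$.
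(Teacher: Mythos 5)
Your proof is correct, and it lands on the same two conditions the paper derives, but the machinery differs in two respects worth noting. First, the paper keeps $r_{\mu,\alpha}^t$ as the third monomial variable (it is an independent optimization variable in (\ref{eq:relaxedfinal}), tied to $x_{\mu,\alpha}^{t-1}$ only through an affine equality constraint), so it analyzes $f(x,y,z)=x^a y^b z^c$ directly with $b=c=1-\beta$; you instead substitute the constraint and work in $(x_{\mu,\alpha}^t,u,p_{\mu,\alpha}^t)$ with $u=(1-\eta_0)x_{\mu,\alpha}^{t-1}+\eta_0$, which obliges you to add the affine-reparametrization/congruence argument — you handle that correctly, and the two routes are equivalent since the map is invertible with $u$ bounded away from zero. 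Second, and more substantively, the paper certifies negative semidefiniteness of the $3\times 3$ Hessian by enumerating all principal minors and intersecting the resulting inequalities, whereas you factor the Hessian as $f\,D\bigl(cc^{\top}-\operatorname{diag}(c)\bigr)D$ and settle definiteness of $cc^{\top}-\operatorname{diag}(c)$ by Cauchy--Schwarz (sufficiency) and the test vector $\mathbf 1$ (necessity). Your route yields the cleaner and dimension-independent criterion $\sum_i c_i\le 1$ for nonnegative exponents — essentially the standard concavity criterion for monomials — at the cost of being restricted to $c_i\ge 0$, which is exactly the concave case $\beta<1$; the paper's minor-by-minor computation is clunkier but is what allows the companion convex case (Lemma \ref{lem:convex}, with negative exponents $b,c=1-\beta<0$) to be treated by the same template. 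Your remark on the endpoint $\beta=\tfrac12$ forcing $a\le 0$ for $t>1$ is consistent with the paper's statement.
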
	
\begin{proof}
	See appendix \ref{ap:concave}.
\end{proof}

The above lemmas provides many different choices for the relaxing function $h(x)$, each of which may result in a specific algorithm. Also since the summation of convex functions is also convex, $h(x)$ may be chosen as the sum of several different functions satisfying the conditions of the proved lemmas. In the remainder of this paper, $\beta>1$ is chosen with $h(x)=x^{2\beta-1}$ since, as will be shown, this choice results in  closed form solutions for optimization variables in each iteration and consequently reduces the computational cost of the algorithm. Also, this choice relaxes the objective function to a convex one for all values of $t$, including $t=1$, for which the value of $r_{\mu,\alpha}^1$ is not an optimization variable. This value is a function of $x_{\mu,\alpha}^0$, which is the result of the algorithm after being executed at the previous service time.

\subsection{Proposed Algorithm}
As discussed earlier, the objective function in (\ref{eq:relaxedfinal}) is a convex one. Consequently, we use the Lagrangian multiplier method to solve the dual problem. The Lagrangian function is as follows  \cite{boyd2004convex}:
\begin{align}
L(\mathbf{x},\mathbf{p},\mathbf{r},&\mathbf{\lambda},\mathbf{\zeta},\mathbf{\gamma})=\sum_{\mu}\sum_{\alpha}\sum_{t}\frac{(x^t_{\mu,\alpha})^{2\beta-1}}{\left(r_{\mu,\alpha}^t p_{\mu,\alpha}^t \right)^{\beta-1}}\nonumber\\
&+\sum_{\mu}\sum_{\alpha}\sum_{t}\lambda_{\alpha}^t x_{\mu,\alpha}^t p_{\mu,\alpha}^t - \sum_{\alpha}\sum_{t}\lambda_{\alpha}^t\nonumber\\
&+\sum_{\mu}\sum_{\alpha}\sum_{t}\zeta_\mu^t x_{\mu,\alpha}^t -\sum_{\mu}\sum_{t}\zeta_\mu^t\nonumber\\
&+\sum_{\mu}\sum_{\alpha}\sum_{t\geq 2}\gamma_{\mu,\alpha}^t r_{\mu,\alpha}^t-\gamma_{\mu,\alpha}^t x_{\mu,\alpha}^{t-1}(1-\eta_0)R_{\mu,\alpha}^t\nonumber\\
&-\sum_{\mu}\sum_{\alpha}\sum_{t\geq 2}\gamma_{\mu,\alpha}^t\eta_0R_{\mu,\alpha}^t,\label{eq:Lagrangian1}
\end{align}
where $x_{\mu,\alpha}^t\in[ 0,1],p_{\mu,\alpha}^t\in(0,1]$. Recall that $r_{\mu,\alpha}^1$ is not an optimization variable and hence its corresponding multiplier, $\gamma_{\mu,\alpha}^1$, is always zero and the last sum is over $2\leq t\leq T$.

The optimal solution to the problem can be found by solving the following dual problem:
\begin{equation}\label{eq:LagrangianDual}
\begin{matrix}
\text{max}&\min & L(\mathbf{x},\mathbf{p},\mathbf{r},\mathbf{\lambda},\mathbf{\zeta},\mathbf{\gamma})\\
\mathbf{\lambda},\mathbf{\zeta},\mathbf{\gamma}&\mathbf{x},\mathbf{p},\mathbf{r}\\
\text{s.t. }&\mathbf{\lambda}\succeq 0 &
\end{matrix}
\end{equation}
This optimization will be solved in two steps. At the f\/irst step, the minimization problem should be solved with known $\mathbf{\lambda},\mathbf{\zeta},\mathbf{\gamma}$. To solve the minimization, KKT conditions may be applied. For $t\geq 2$, setting derivatives of the Lagrangian with respect to variables equal to zero yields
\begin{align}
\frac{\partial L}{\partial x_{\mu,\alpha}^t}&=\frac{(2\beta-1)(x_{\mu,\alpha}^t)^{2\beta-2}}{(r_{\mu,\alpha}^t)^{\beta-1}(p_{\mu,\alpha}^t)^{\beta-1}}
+\lambda_{\alpha}^t p_{\mu,\alpha}^t+\zeta_\mu^t\nonumber\\
&-\gamma_{\mu,\alpha}^{t+1}(1-\eta_0)R_{\mu,\alpha}^{t+1}=0,\label{eq:derivX}\\
\frac{\partial L}{\partial p_{\mu,\alpha}^t}&=-\frac{(\beta-1)(x_{\mu,\alpha}^t)^{2\beta-1}}{(r_{\mu,\alpha}^t)^{\beta-1}(p_{\mu,\alpha}^t)^{\beta}}+\lambda_{\alpha}^tx_{\mu,\alpha}^t=0,\label{eq:derivP}\\
\frac{\partial L}{\partial r_{\mu,\alpha}^t}&=-\frac{(\beta-1)(x_{\mu,\alpha}^t)^{2\beta-1}}{(r_{\mu,\alpha}^t)^{\beta}(p_{\mu,\alpha}^t)^{\beta-1}}+\gamma_{\mu,\alpha}^t=0.\label{eq:derivR}
\end{align}
Note that $r_{\mu,\alpha}^1$ has a predef\/ined value based on the AP allocation at the previous service time and is not an optimization parameter. Thus, for $t=1$ we have
\begin{align}
\frac{\partial L}{\partial x_{\mu,\alpha}^1}&=\frac{(2\beta-1)(x_{\mu,\alpha}^1)^{2\beta-2}}{(r_{\mu,\alpha}^1)^{\beta-1}(p_{\mu,\alpha}^1)^{\beta-1}}
+\lambda_{\alpha}^1 p_{\mu,\alpha}^1+\zeta_\mu^1\nonumber\\
&-\gamma_{\mu,\alpha}^{2}(1-\eta_0)R_{\mu,\alpha}^{2}=0,\label{eq:derivX1}\\
\frac{\partial L}{\partial p_{\mu,\alpha}^1}&=-\frac{(\beta-1)(x_{\mu,\alpha}^1)^{2\beta-1}}{(r_{\mu,\alpha}^1)^{\beta-1}(p_{\mu,\alpha}^1)^{\beta}}+\lambda_{\alpha}^1x_{\mu,\alpha}^1=0.\label{eq:derivP1}
\end{align}
By using (\ref{eq:derivP}) and (\ref{eq:derivR}) we obtain
\begin{equation}
\lambda_{\alpha}^t x_{\mu,\alpha}^{t^*} p_{\mu,\alpha}^{t^*}=\gamma_{\mu,\alpha}^t r_{\mu,\alpha}^{t^*}.\label{eq:xpr}
\end{equation}
The optimum values of $x_{\mu,\alpha}^t$ and $r_{\mu,\alpha}^t$ can be expressed as a function of $p_{\mu,\alpha}^t$ as follows:
\begin{align}
r_{\mu,\alpha}^{t^*}&=\frac{(\lambda_{\alpha}^t)^{\frac{2\beta-1}{\beta-1}}}{(\beta-1)^{\frac{1}{\beta-1}}(\gamma_{\mu,\alpha}^t)^2}(p_{\mu,\alpha}^{t^*})^{\frac{3\beta-2}{\beta-1}}\qquad t\geq 2\label{eq:rOptp},\\
x_{\mu,\alpha}^{t^*}&=\left(\frac{(r_{\mu,\alpha}^t)^{\beta-1}(p_{\mu,\alpha}^t)^\beta \lambda_{\alpha}^t}{\beta-1}\right)^\frac{1}{2\beta-2}.\label{eq:xOptp}
\end{align}

By using (\ref{eq:derivX}), (\ref{eq:xOptp}), and (\ref{eq:rOptp}), we obtain
\begin{align}
&\frac{(2\beta-1)\dfrac{(\lambda_{\alpha}^t)^{\frac{\beta(2\beta-2)}{\beta-1}}}{(\beta-1)^{\frac{2\beta-2}{\beta-1}}(\gamma_{\mu,\alpha}^t)^{2\beta-2}}(p_{\mu,\alpha}^{t^*})^{\frac{(2\beta-1)(2\beta-2)}{\beta-1}}}{\dfrac{(\lambda_{\alpha}^t)^{2\beta-1}}{(\beta-1)(\gamma_{\mu,\alpha}^t)^{2\beta-2}}(p_{\mu,\alpha}^{t^*})^{4\beta-3}}\nonumber\\
&+\lambda_{\alpha}^t p_{\mu,\alpha}^{t^*}+\zeta_\mu^t -\gamma_{\mu,\alpha}^{t+1}(1-\eta_0)R_{\mu,\alpha}^{t+1}=0.
\end{align}
The optimum value of $p_{\mu,\alpha}^t$ is then given by
\begin{equation}\label{eq:pOpt}
p_{\mu,\alpha}^{t^*}=\frac{\beta-1}{3\beta-2}\frac{\gamma_{\mu,\alpha}^{t+1}(1-\eta_0)R_{\mu,\alpha}^{t+1}-\zeta_\mu^t}{\lambda_{\alpha}^t}.
\end{equation}
Note that we have $\gamma_{\mu,\alpha}^{T+1}=0$.

Equation (\ref{eq:pOpt}) shows that the solution's dependency  at service time $t$ on the future is by means of the Lagrangian multiplier associated with the constraint on rates, i.e., $\gamma_{\mu,\alpha}^{t+1}$. Hence, the mobility-unaware scheme, which is achieved by setting $T=1$, doesn't have this coeff\/icient.   

Up to now, the minimization in (\ref{eq:LagrangianDual}) is solved for given values of the Lagrangian multipliers ($\mathbf{\lambda},\mathbf{\zeta},\mathbf{\gamma} $). In the second step, the maximization in (\ref{eq:LagrangianDual}) is solved by using the obtained values for $\lbrace \mathbf{x}^*,\mathbf{p}^*,\mathbf{r}^* \rbrace$:
\begin{equation}\label{eq:step2dual1}
\max\limits_{\mathbf{\lambda}\succeq 0,\mathbf{\zeta},\mathbf{\gamma}}\quad L(\mathbf{x}^*,\mathbf{p}^*,\mathbf{r}^*,\mathbf{\lambda},\mathbf{\zeta},\mathbf{\gamma}).
\end{equation}
This can be done iteratively using the gradient ascent method. 

The derivatives of $L$ with respect to the Lagrangian multipliers are computed as follows:
\begin{align}
\frac{\partial L(\mathbf{x}^*,\mathbf{p}^*,\mathbf{r}^*,\mathbf{\lambda},\mathbf{\zeta},\mathbf{\gamma})}{\partial \lambda_{\alpha}^t} &=\sum_{\mu}x_{\mu,\alpha}^{{t}^*}p_{\mu,\alpha}^{{t}^*}-1,\\
\frac{\partial L(\mathbf{x}^*,\mathbf{p}^*,\mathbf{r}^*,\mathbf{\lambda},\mathbf{\zeta},\mathbf{\gamma})}{\partial \zeta_{\mu}^t} &=\sum_{\alpha}x_{\mu,\alpha}^{t^*}-1,\\
\text{{\small $\frac{\partial L(\mathbf{x}^*,\mathbf{p}^*,\mathbf{r}^*,\mathbf{\lambda},\mathbf{\zeta},\mathbf{\gamma})}{\partial \gamma_{\mu,\alpha}^t}$}} &\text{{\small $=r_{\mu,\alpha}^{t^*}-x_{\mu,\alpha}^{t-1^*}(1-\eta_0)R_{\mu,\alpha}^t-\eta_0R_{\mu,\alpha}^t$}}.
\end{align}

The Lagrangian multipliers are then updated in the ascent direction of the gradient iteratively as follows:
\begin{align}
\lambda_{\alpha}^t(n+1) &=\left[\lambda_{\alpha}^t(n)+ \epsilon_\lambda(n)\left(\sum_{\mu}x_{\mu,\alpha}^{{t}^*}p_{\mu,\alpha}^{{t}^*}-1\right)\right]^+,\label{eq:gradient1}\\
\zeta_{\mu}^t(n+1) &=\lambda_{\alpha}^t(n)+ \epsilon_\zeta(n)\left(\sum_{\alpha}x_{\mu,\alpha}^{t^*}-1\right),\label{eq:gradient2}\\
\text{\footnotesize $\gamma_{\mu,\alpha}^t(n+1)$} &=\text{\footnotesize $ \gamma_{\mu,\alpha}^t(n)+ \epsilon_\gamma(n)\left(r_{\mu,\alpha}^{t^*}-(x_{\mu,\alpha}^{t-1^*}(1-\eta_0)+\eta_0)R_{\mu,\alpha}^t\right)$},\label{eq:gradient3}
\end{align}
where $\epsilon_\lambda(n),\epsilon_\zeta(n),\epsilon_\gamma(n)$ are the step sizes indicating the amount of movement in the ascent direction of the gradient with $[\cdot]^+=\max\lbrace.,0\rbrace$. Choosing small step sizes leads to a slow convergence. If the step size is chosen too large, the algorithm might diverge. We have obtained convergence by setting $\epsilon_\lambda(n)=\epsilon_\zeta(n)=100\epsilon_\gamma(n)=\epsilon(n)$. The parameter $\epsilon(n)$ may be constant or decay similarly to \cite{jin2015resource}.

Finally, each user should connect to only one AP at each service time. Hence, the relaxed variable will be recovered by assigning each user to the AP with the highest value of $x_{\mu,\alpha}^{t^*}$.
\begin{equation}\label{eq:Xopt1}
x_{\mu,\alpha}^{t^*}\gets\left\lbrace \begin{matrix}
1, & x_{\mu,\alpha}^{t^*}=\arg\max_{\alpha\in\mathcal{A}} x_{\mu,\alpha}^{t^*}\\
0, & \text{otherwise}.
\end{matrix}\right.
\end{equation}
Note that the only value needed for the implementation is $x_{\mu,\alpha}^{1^*}$ and $p_{\mu,\alpha}^{1^*}$.
After converting $x_{\mu,\alpha}^{1^*}$ to a binary variable, $p_{\mu,\alpha}^{1^*}$ should be normalized in order to use all of the allocated resources:
\begin{equation}\label{eq:Popt1}
p_{\mu,\alpha}^{1^*}\gets \frac{p_{\mu,\alpha}^{1^*}}{x_{\mu,\alpha}^{1^*}\sum_{\mu}p_{\mu,\alpha}^{1^*}}.
\end{equation}
 
We propose the so-called \textit{Mobilty-aware VLC Resource (MVR)} allocation algorithm, whose pseudo-code is given by Algorithm \ref{alg:lagrange}.
\begin{algorithm}
	\caption{MVR}\label{alg:lagrange}
	\begin{algorithmic}[1]
		\Statex \textbf{Input}
		\Statex $T$: The total number of future service times influencing current service time.
		\Statex $y_{\mu}^t$: Position of user $\mu$ at service time $0\leq t\leq T$.
		\Statex \textbf{Initialization}
		\Statex $n \gets 1$.
		\Statex Lagrangian multipliers $\lambda_{\alpha}^t(0),\zeta_{\mu}^t(0),\gamma_{\mu,\alpha}^t(0)$.
		\Statex Step size: $\epsilon_\lambda(n),\epsilon_\zeta(n),\epsilon_\gamma(n)$.
		\ForAll {$\mu\in\mathcal{U}$,$\alpha\in\mathcal{A}$,$1\leq t\leq T$}
		\State CU calculates $R^t_{\mu,\alpha}$.
		\EndFor
		\While {$n<N$}
		\ForAll {$\mu\in\mathcal{U}$,$\alpha\in\mathcal{A}$,$0\leq t\leq T$}
		\State Find optimal $\mathbf{p}^*$,$\mathbf{x}^*$,$\mathbf{r}^*$ using (\ref{eq:pOpt}),(\ref{eq:xOptp}) and (\ref{eq:rOptp}).
		\State Update multipliers according to (\ref{eq:gradient1}),(\ref{eq:gradient2}) and (\ref{eq:gradient3}). 
		\EndFor
		\State $n \gets n+1$.
		\EndWhile
		\State Recover the relaxed variables, $\mathbf{x}^*$, according to (\ref{eq:Xopt1}).
		\State Normalize $\mathbf{p}^*$ based on obtained  $\mathbf{x}^*$ using (\ref{eq:Popt1}).
		\Statex \textbf{Output}: $\mathbf{x}^*,\mathbf{p}^*$ at $t=1$.
	\end{algorithmic}
\end{algorithm}

To evaluate the long-term performance of the proposed algorithm, we use the \textit{total objective function}, which is the sum of the objective function over all service times. Fig. \ref{fig:minlp} shows that the performance of the proposed algorithm (MVR) is close to the one obtained via exhaustive search in a 3-user environment\footnote{Since the exhaustive search is very time-consuming for large number of users, only 3 users are simulated here.} with 2 APs. The results are compared for $T=1$ with the heuristic algorithm of  \cite{wang2017optimization}, called Joint Optimization Algorithm (JOA), for the mobility unaware scheme. This f\/igure shows that the relaxed algorithm performs close to the optimum of the MINLP problem.

\begin{figure}[t]
	\centering
	\includegraphics[width=0.4\textwidth]{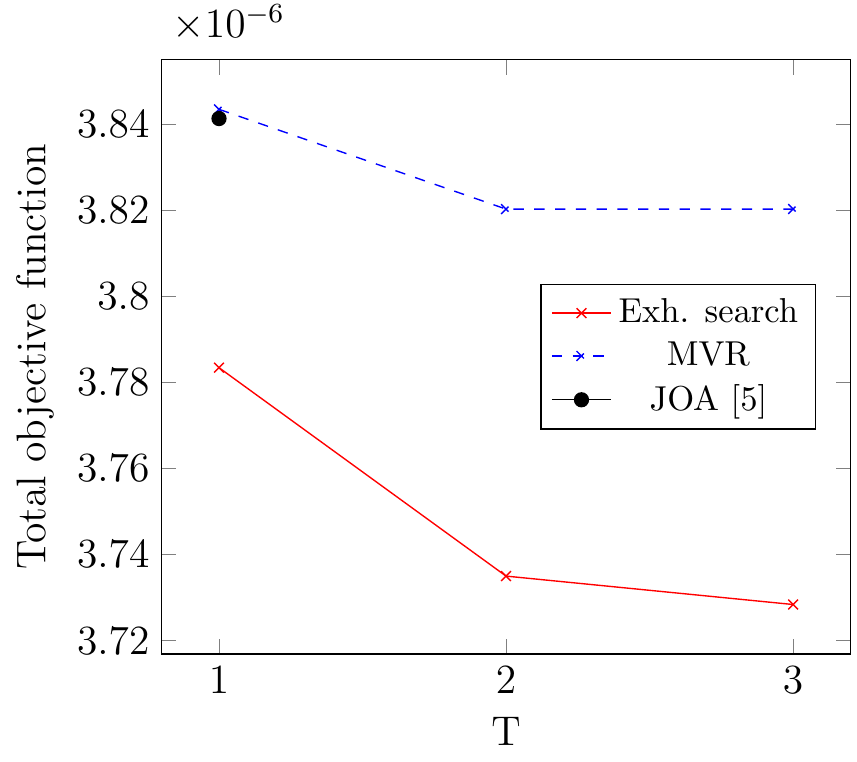}
	\caption{Comparison of optimum performance of MVR, JOA \cite{wang2017optimization}, and exhaustive search (system is run for one minute with $\beta=2$ and $\eta_0=0.75$.)}
	\label{fig:minlp}
\end{figure}

\section{Mobility Model}\label{sec:mobility}
To evaluate the performance of network algorithms in the presence of mobile users, a proper model should be used to describe the mobility of users. Various mobility models can be used, amongst them are Brownian motion, random walk, Markovian models \cite{hong1999group}, region splitting \cite{ashtiani2003mobility}, and Levy walk \cite{rhee2011levy}. Further detailed information about mobility models can be found in \cite{hong1999group,camp2002survey}. In random walk mobility models, each user moves along a straight line at a constant speed. One of the most widely used random walk models is the random waypoint mobility model. This model has been used in the simulation of networks and has been studied extensively \cite{bettstetter2003node,bettstetter2004stochastic}. In this section, we describe the method of predicting users' mobility in greater detail. An upper-bound on service time is also derived in this section.

\subsection{Prediction}
As discussed earlier, for $T>1$ the proposed algorithm requires the prediction of the future data rates experienced by users. Therefore, the need  for predicting the users' future locations is inevitable. It has been previously shown that human mobility exhibits a high potential predictability \cite{song2010limits}. Hence, an approximately accurate prediction of the users' future locations is possible based on their mobility patterns in the past.

In order to predict the users' position, in random walk mobility models it is enough for the positioning system to measure the users' positions at two consecutive times. With the information obtained from these measurements, the CU can calculate the velocity of each user. Since the movement in these models occurs at constant speed, the future path of the users can be predicted by using both velocity and current position of the users. Note that the prediction in these models can lead to a wrong value of future position, if one user stops and changes direction sometime in the intended future. 

For simplicity, let us assume that the users exactly follow a random waypoint mobility model. However, if the users' mobility show some small deviations from the constant speed model, these deviations may be included as noise in the mobility model of the users and a Kalman f\/ilter may be used to f\/ilter out the noise \cite{xiong2016cooperative}. Another approach to prediction, which is more general, may be achieved with the aid of multistep-ahead time series prediction \cite{cheng2006multistep}. 

\subsection{Service Time in Random Waypoint Model}
In random waypoint mobility model, each user chooses a point in the room randomly with uniform distribution and then moves at constant speed from its current position to the chosen point. There the user stops and waits for a random time and then chooses another location with an optionally another speed. This moving process repeats itself again and again.

According to the def\/inition of the random waypoint model, stop points are chosen uniformly. As a result, the distribution of stop points conditioned on knowing the current position and direction of movement will also be uniform.  Assuming that each service time lasts $\tau_p$ seconds, the probability that a user stops or changes direction before $T$  service times, which is also the probability of mis-prediction during $T$ service times, is given by
\begin{align}
&\Pr(\text{stop before } \vec{y}_\mu^{\,T}|\vec{y}_\mu^{\,0} ,\vec{v}_\mu^{\,0})=\nonumber\\
&\Pr(l_s\leq l(T\tau_p)|\vec{y}_\mu^{\,0},\vec{v}_\mu^{\,0})=\frac{v_0T\tau_p}{l(\theta_0,\vec{y}_\mu^{\,0} )},
\end{align}
where $\vec{v}_\mu^{\,0}=v_0\angle \theta_0$ is the speed vector of the user, $\theta_{0}$ is the angle of speed vector which shows the direction of movement, and $l(\theta_0,\vec{y}_\mu^{\,0} )$ is the random variable of remaining travel length of user before its pause conditioned on the current position, $\vec{y}_\mu^{\,0}$, and direction of movement. 
 
Clearly, the probability of mis-prediction should be small during each run of the algorithm, which considers $T$ later service times. This fact can be used to obtain an upper bound on the service time interval. Thus, $\tau_p$  is chosen such that the expected value of this probability be lower than a certain value, say, $\delta>0$, as follows:
\begin{align}
 &\mathbb{E}[\Pr(\text{stop before } \vec{y}_\mu^{\,T}|\vec{y}_\mu^{\,0} ,\vec{v}_\mu^{\,0})]\leq \delta\Rightarrow\nonumber\\
 &\tau_p\leq\mathbb{E}\left[\frac{\delta l(\theta_0,\vec{y}_\mu^{\,0} )}{v_0T}\right].\label{eq:prdelta}
 \end{align}
According to \cite{bettstetter2004stochastic}, if speed is uniformly distributed in $[v_\text{max},v_\text{min}]$, $v_\text{min}>0$\footnote{$v_\text{min}=0$ is equivalent to a pause}, we have
\begin{align}
\mathbb{E}\left[\frac{l(\theta_0,\vec{y}_\mu^{\,0} )}{v_0}\right]&= \mathbb{E}[l(\theta_0,\vec{y}_\mu^{\,0} )]\int_{v_\text{min}}^{v_\text{max}}\frac{1}{v}f_V(v)dv\nonumber\\ 
&= \frac{\ln(v_\text{max}/v_\text{min})}{v_\text{max}-v_\text{min}}\mathbb{E}[l(\theta_0,\vec{y}_\mu^{\,0} )],
\end{align}
whereby $\mathbb{E}[l(\theta_0,\vec{y}_\mu^{\,0})]$ was calculated for different environments in \cite{bettstetter2004stochastic}. Hence, the upper bound on service time for a required performance may be found in terms of the environment parameters and $T$ as follows:
\begin{equation}
\tau_p\leq \frac{\delta}{T}\frac{\ln(v_\text{max}/v_\text{min} )}{v_\text{max}-v_\text{min}}\mathbb{E}[l] .
\end{equation}

\section{Numerical Results}
\label{sec:numerical}
In this section, we verify the proposed method using various simulations. The mobility aware algorithms in \cite{li2016mobility} and \cite{zhang2018anticipatory} focus on rate maximization without fairness. Hence, comparing JOA or MVR with them is not fair and the results will be trivial. The approach of \cite{elazzouni2018qos} is limited to a single AP hybrid scenario which is not applicable to the multiple AP scenario of this paper. Hence, the performance of the proposed algorithm is compared with the JOA algorithm \cite{wang2017optimization} which uses a similar utility function but without considering mobility-awareness. 

Although the proposed algorithm is applicable to arbitrary number of APs, we evaluate our proposed method in presence of minimal equipments. Hence the main simulation scenario under consideration is an 8m $\times$ 4m indoor room equipped with 2 VLC APs. Users move according to the random waypoint mobility model described in the previous section. The speed of users is uniformly distributed between 0 m/s and 1 m/s with a pause time between 0 s and 1 s. The stationary distribution of the random waypoint mobility model \cite{navidi2004stationary} is used as initial distribution of the users. Frequency reuse is implemented for the APs in order to eliminate interference between adjacent cells. The remaining simulation parameters are given in Table \ref{tab:parameters} in compliance with \cite{wang2017optimization,wang2015dynamic,jin2015resource}. All simulations are run for 15 minutes, which is equivalent to 3000 realizations of service times with $\beta=2$ and $\eta_0=0.75$.

\begin{table}[bt]
	\caption{Simulation Parameters}
	\centering
	\begin{tabular}{ll}
		\hline
		Parameter & Value\\
		\hline
		Vertical distance between APs and users, $h$ & 2.3 m\\
		Average transmitted optical power $P$ & 10 W\\
		Baseband modulation bandwidth $B$ & 20 Mhz\\
		Physical area of a PD, $A$ & 1 cm$^2$\\
		Half-intensity radiation angle $\Phi_{1/2}$ & 30 deg.\\
		Gain of optical f\/ilter, $T_s (\psi)$ & 1.0 \\
		Receiver FoV semi-angle, $\Psi_f$  & 90 deg.\\
		Refractive index, $n$ & 1.5\\
		Optical to electric conversion eff\/iciency, $\kappa$ & 0.53\\
		Noise power spectral density $N$ & {\scriptsize 1e-19 W/Hz}\\
		Resource allocation service-time $\tau_p$ & 300ms\\
		\hline
	\end{tabular}\label{tab:parameters}
\end{table}

\subsubsection{Throughput}
The throughput of the system is plotted in Fig. \ref{fig:rate} for different prediction levels in comparison with the so-called JOA algorithm proposed by Wang \textit{et al.} \cite{wang2017optimization}. The results show that the system throughput of the proposed method increases with the increase of prediction level. We also observe that MVR gives throughputs higher than that of JOA. 

Because of user diversity, as shown in Fig. \ref{fig:rate}, throughput of the system is constantly increasing when the number of users is smaller than 20. However as the number of users increase further, the number of handovers increases and the negative impact of handovers will surpass the positive effect of user diversity decreasing the throughput. 
\begin{figure}[bt]
	\centering
	\includegraphics[width=0.45\textwidth]{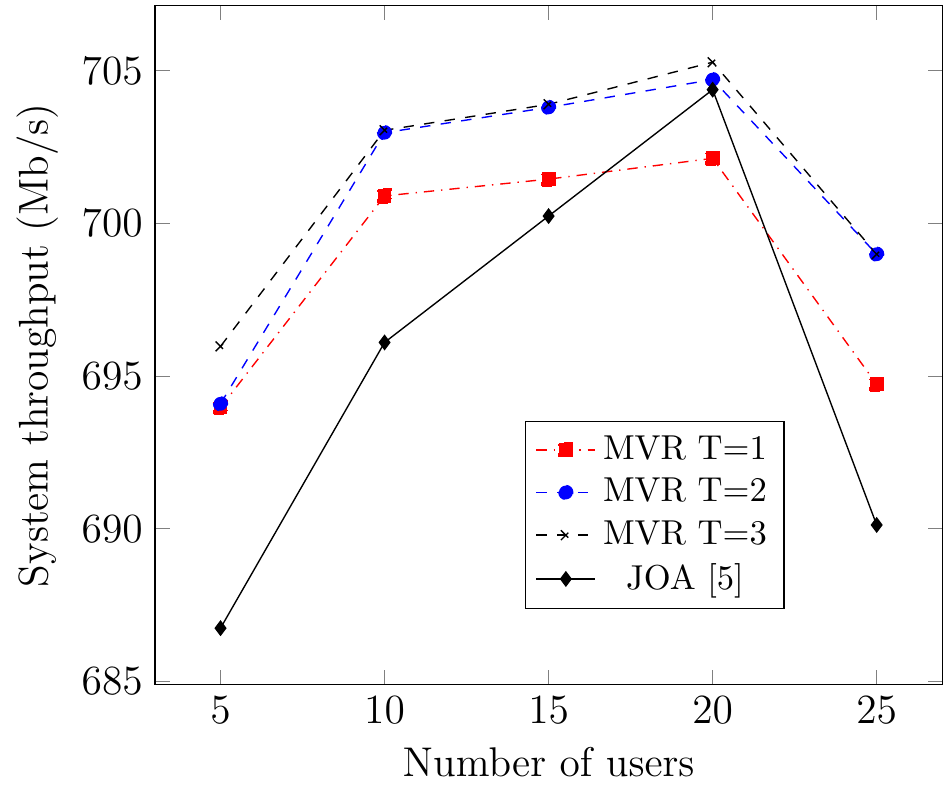}
	\caption{Impact of prediction on system throughput (2 APs).}
	\label{fig:rate}
\end{figure}

Similar results may be obtained by increasing number of the APs. For example Fig. \ref{fig:rate4} shows the average throughput of the proposed algorithm in presence of 4 APs in an 8m$\times$8m room. In this case the gain of the algorithm for $T>2$ is negligible.

\begin{figure}[bt]
	\centering
	\includegraphics[width=0.45\textwidth]{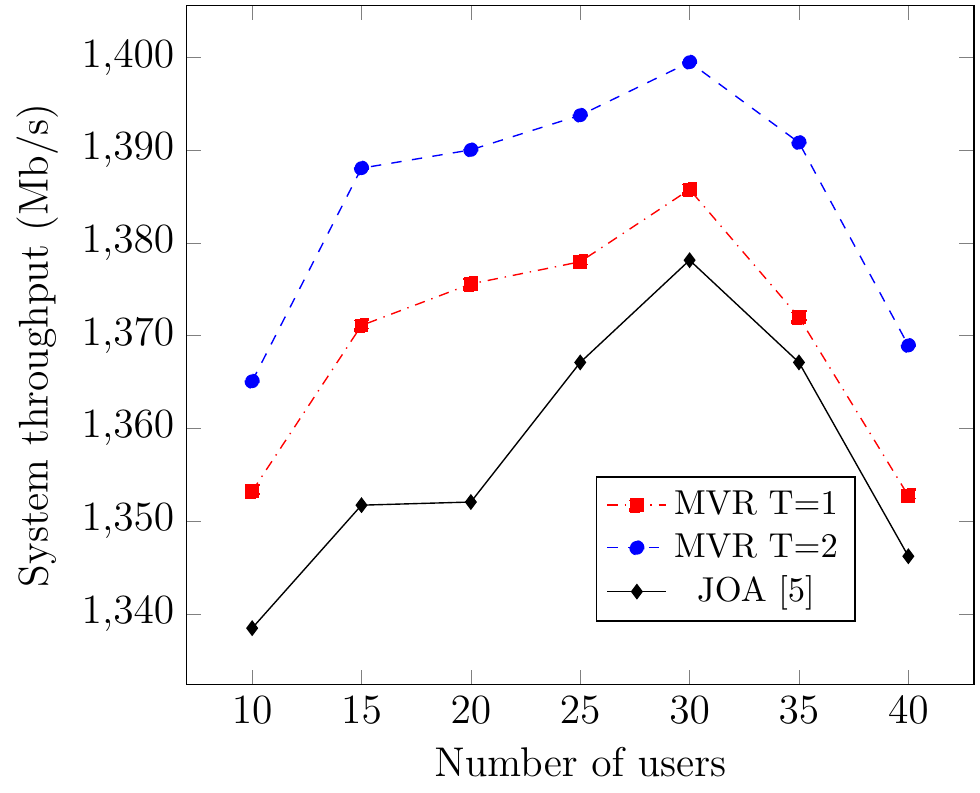}
	\caption{Average system throughput in presence of 4 APs.}
	\label{fig:rate4}
\end{figure}

\subsubsection{Saturation effect}
Fig. \ref{fig:objective} shows the total objective function of the system with different levels of prediction. The results show that for $T>3$ the gain of the algorithm becomes negligible.

\begin{figure}[bt]
	\centering
	\includegraphics[width=0.45\textwidth]{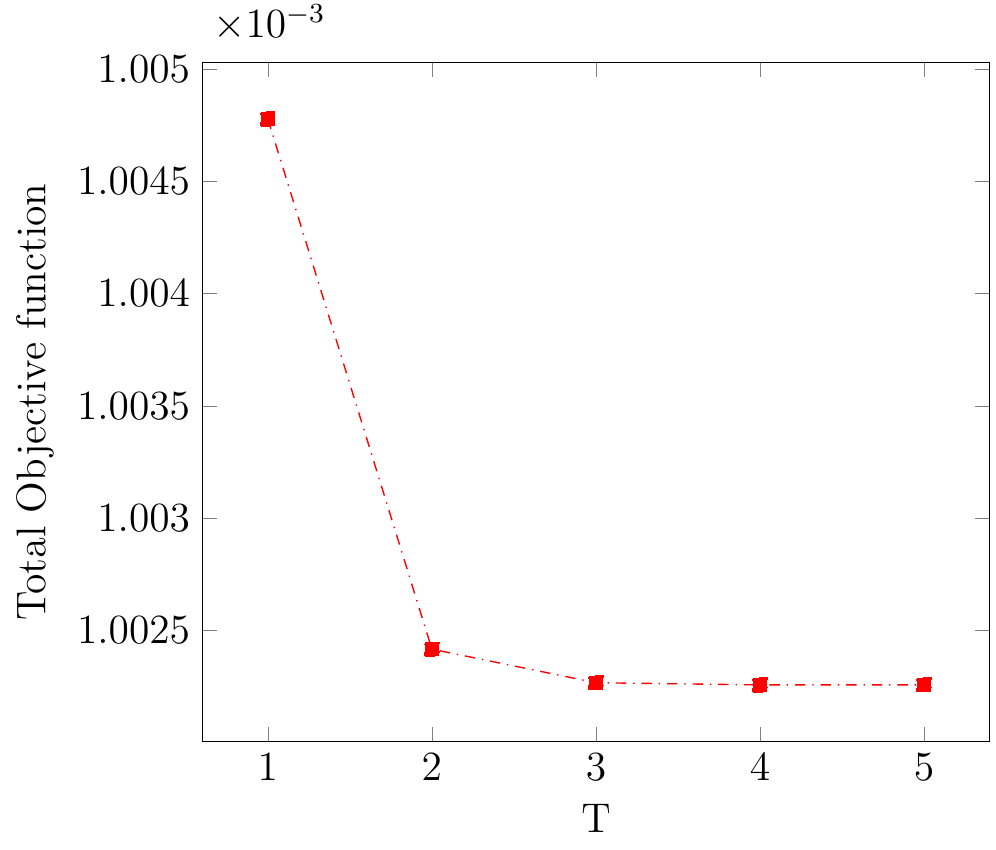}
	\caption{Total objective function of MVR vs. prediction level for 15 users.}\label{fig:objective}
\end{figure}
\begin{figure}[bt]
\centering
\includegraphics[width=0.45\textwidth]{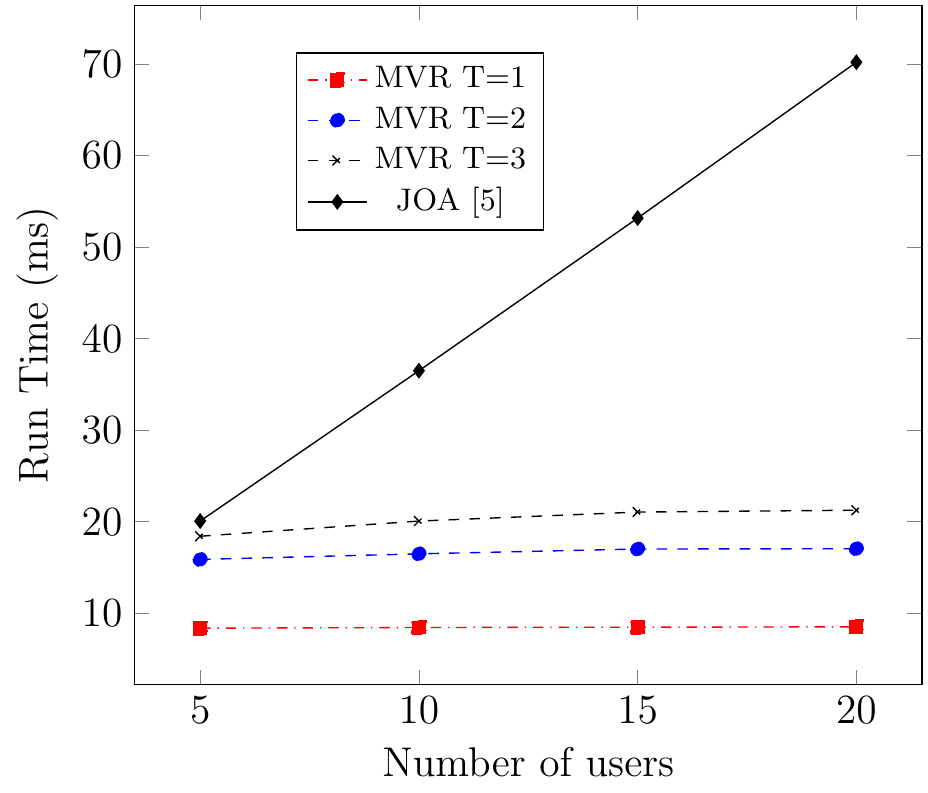}
\caption{Average run-time of algorithm for each service time.}\label{fig:runtime}
\end{figure}

\subsubsection{Runtime}
The algorithm was implemented on a system with a single 2.16 GHz Intel$^{\circledR}$ Celeron$^{\circledR}$ dual-core CPU with 4.00 GB RAM. Fig. \ref{fig:runtime} depicts the average runtime of the different algorithms for one service time. We observe that the runtime increases with the number of users and prediction levels. However, this increase is higher for the algorithm of \cite{wang2017optimization} due to the need for f\/inding argmax in each iteration. This plot conf\/irms that MVR is fast enough to be implemented in VLC networks, being faster than JOA especially for high number of users. 

While we weren't able to achieve convergence for a mobility-aware algorithm based on the decomposition technique of \cite{jin2015resource}, it still might be possible to add mobility prediction into the decomposition scheme of JOA which can be a topic for future research. However, the run-time of such algorithm will grow exponentially with the increase of prediction level, $T$. The decomposition approach of JOA requires a search for each user in each iteration between $|\mathcal{A}|$ values. Hence the total number of searches will be $|\mathcal{U}|\times|\mathcal{A}|$ in each iteration. Assuming $I$ iterations is required for convergence of the algorithm, a mobility-aware decomposition approach will require $|\mathcal{U}|\times|\mathcal{A}|^T\times I$ searches since for each user the parameters should be found for $T$ service-times. Hence, such an algorithm would not be suitable for online purposes. In contrast, MVR only needs to search after convergence of the relaxed problem and it is needed to perform the search only for current service-time. This results in $|\mathcal{U}|\times|\mathcal{A}|$ overall searches, which do not exponentially grow with the prediction level and is significantly smaller than that of JOA.

\subsubsection{Handover cost}
The total number of handovers is shown in Fig. \ref{fig:handover2}. We note that increasing the prediction level leads to a slight increase in the number of handovers. This result shows that the mobility-aware optimization of the fairness function does not decrease the number of handovers, as opposed to the mobility-aware rate maximization \cite{li2016mobility}. Furthermore, the handover cost of the proposed method is lower than that of \cite{wang2017optimization}, especially for small numbers of users, for which we have observed an oscillatory behavior in the AP assignment of the JOA. In JOA, the relaxed variable is recovered by some sort of brute-force  in each iteration of the algorithm, which might result in such oscillatory behavior. In MVR, however, the algorithm solves an optimization problem with a convex objective function in which the relaxed variable changes smoothly over time and the recovery of the relaxed variable is done after the convergence of algorithm. This smooth nature of the relaxed optimization eliminates such oscillatory behavior.

\begin{figure}[bt]
	\centering
	\includegraphics[width=0.45\textwidth]{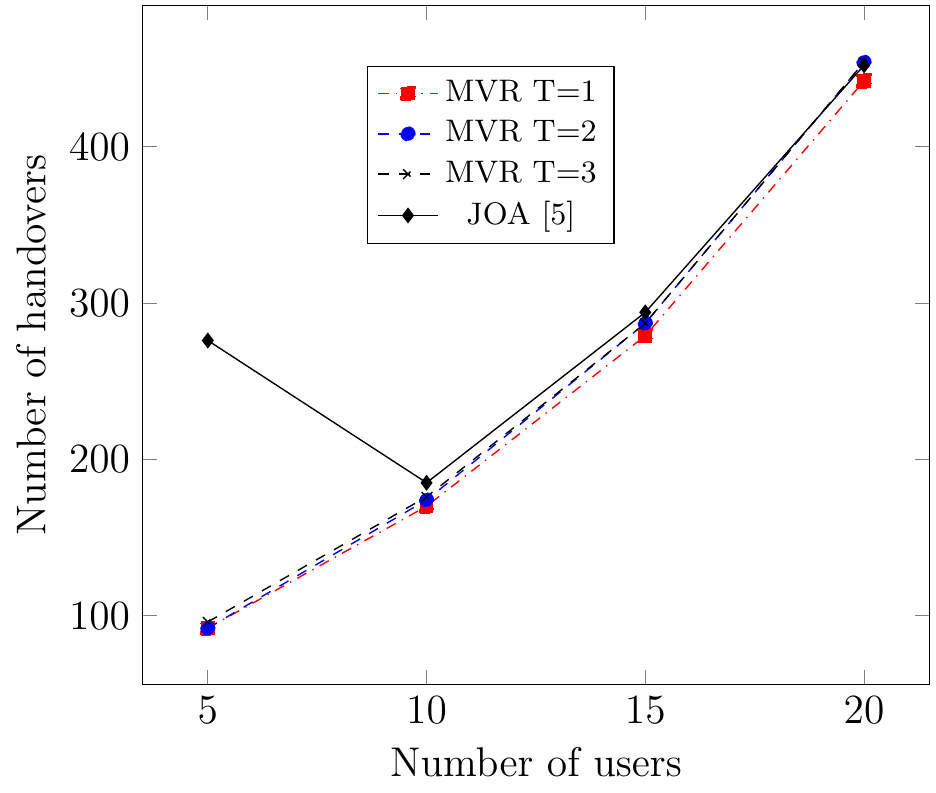}
	\caption{Total number of handovers for different algorithms.}\label{fig:handover2}
\end{figure}

\section{Conclusions}
\label{sec:conclusion}
In this paper, the awareness of user mobility was used in order to optimize the overall performance of VLC networks. By assuming the availability of a positioning system and exploiting the stationary characteristics of the indoor VLC channel, the future data rate of users was calculated and fed into the optimization problem. A novel relaxation method was proposed, which eff\/iciently solves both the mobility-aware and mobility-unaware problems. The reported simulation results show the effectiveness of our proposed method, which also eliminates the need for applying extra argmax in each iteration, as required by \cite{wang2017optimization,jin2015resource}. Hence, our proposed method is signif\/icantly faster. Our approach may be used in problems with proportional fairness objectives in order to achieve eff\/icient algorithms.


%

\appendices

\section{Proof of Lemma \ref{lem:convex}\\A Convex 3-variable Monomial}\label{ap:convex}
In this appendix some constraints on $a,b,c$ are derived such that the following function be convex:
\begin{equation}\label{Aeq:func}
f(x,y,z)=x^ay^bz^c,
\end{equation}
in which $x,y,z\geq 0$.

In order for $f(x,y,z)$ to be convex, its Hessian matrix should be positive semi-def\/inite \cite[p.~71]{boyd2004convex}. According to \cite[p.~566]{meyer2000matrix} a necessary and suff\/icient condition is that all of the principal minors of Hessian matrix should be non-negative. The Hessian matrix can be written as follows:
 \begin{align}
&\qquad\mathcal{H}=\begin{bmatrix}
\frac{\partial^2 f}{\partial^2 x} & \frac{\partial^2 f}{\partial x\partial y}  & \frac{\partial^2 f}{\partial x\partial z} \\
\frac{\partial^2 f}{\partial y\partial x} & \frac{\partial^2 f}{\partial^2 y} & \frac{\partial^2 f}{\partial y\partial z} \\
 \frac{\partial^2 f}{\partial z \partial x} & \frac{\partial^2 f}{\partial z \partial y}  &\frac{\partial^2 f}{\partial^2 z} 
\end{bmatrix}=\\&\begin{bmatrix}
a(a-1)x^{a-2}y^bz^c & abx^{a-1}y^{b-1}z^c & acx^{a-1}y^bz^{c-1}\\
abx^{a-1}y^{b-1}z^c & b(b-1)x^ay^{b-2}z^c & bcx^ay^{b-1}z^{c-1}\\
acx^{a-1}y^bz^{c-1} & bcx^ay^{b-1}z^{c-1} & c(c-1)x^ay^bz^{c-2}
\end{bmatrix}\nonumber
\end{align} 

First order principal minors give
\begin{equation}
\left\lbrace\begin{array}{ll}
a(a-1)\geq 0\Leftrightarrow & a\leq 0\quad \text{or}\quad
a\geq 1,\\
b(b-1)\geq 0\Leftrightarrow & b\leq 0 \quad \text{or}\quad
b\geq 1,
\\
c(c-1)\geq 0\Leftrightarrow & c\leq 0 \quad \text{or}\quad
c\geq 1.
\end{array}\right.
\end{equation}

Second order principal minors
\begin{equation}
\left\lbrace \begin{matrix}
ab(a+b-1)\leq 0,\\
ac(a+c-1)\leq 0,\\
bc(b+c-1)\leq 0.
\end{matrix}\right.
\end{equation}

Third order principal minor is the determinant of matrix:
\begin{align}
|\mathcal{H}|\geq 0 & \Leftrightarrow abc \begin{vmatrix}
(a-1) & a & a\\
b & (b-1) & b\\
c & c & (c-1)
\end{vmatrix}\geq 0\nonumber\\
&\Leftrightarrow abc(a+b+c-1)\geq 0.
\end{align}
Taking $b,c\leq 0$ and $a\geq 0$ implies:
\begin{equation}\label{Aeq:constraints}
\left\lbrace \begin{array}{l}
a\geq 1-b\geq 1,\\
a\geq 1-c\geq 1,\\
b+c\leq 0 \leq 1,\\
a\geq 1-b-c\geq \max{\lbrace 1-b,1-c\rbrace}.
\end{array}\right.
\end{equation}

In case of $t=1$ we have $c=0,b=1-\beta<0$. Thus the inequalities of (\ref{Aeq:constraints}) may hold simultaneously iff
\begin{equation}
a\geq 1-b=\beta.
\end{equation}

In case of $t>1$ we have $c=b=1-\beta<0$. The inequalities of (\ref{Aeq:constraints}) may hold iff
\begin{equation}
a\geq 1-2b=2\beta-1.
\end{equation}
which completes the proof.

A common solution to the inequalities in (\ref{Aeq:constraints}) for all integer $t$s can be obtained by taking $a=1-2b\geq 1-b$.  Therefore the following functions are convex when $\beta>1$\footnote{Although if $b$ and $c$ are zero together then the function will be convex, this may not happen in the proportional fairness function.}:
\begin{equation}\label{Aeq:convexgen}
f(x,y,z)=\frac{x^{2\beta-1}}{y^{\beta-1}z^{\beta-1}},\qquad f(x,y)=\frac{x^{2\beta-1}}{y^{\beta-1}}.
\end{equation}

\section{Proof of Lemma \ref{lem:concave}\\A Concave 3-variable Monomial}\label{ap:concave}
An approach similar to appendix \ref{ap:convex} can be taken for the function of (\ref{Aeq:func}) to be concave. In this case as a necessary and suff\/icient condition, the Hessian matrix should be negative semi-def\/inite.

First and third order principal minors should be non-positive and second order principal minors should be non-negative. Taking $0\leq a,b,c\leq 1$ implies:
\begin{equation}\label{Aeq:constraints2}
\left\lbrace \begin{array}{l}
b+c\leq 1,\\
a \leq 1-b-c \leq \min{\lbrace 1-b,1-c\rbrace}.
\end{array}\right.
\end{equation}

In case of $t=1$ we have $c=0,b=1-\beta$. Thus the inequalities of (\ref{Aeq:constraints2}) may hold simultaneously iff
\begin{equation}
a\leq \beta,\quad \beta\neq 0.
\end{equation}

In case of $t>1$ we have $c=b=1-\beta$. The inequalities of (\ref{Aeq:constraints2}) may hold iff
\begin{equation}
a\leq 2\beta-1,\quad \frac{1}{2}\leq\beta<1.
\end{equation}
which completes the proof.


\section*{Acknowledgment}
The authors would like to thank Dr. M.R. Pakravan for suggesting the real curve of the intuitive example.

\ifCLASSOPTIONcaptionsoff
  \newpage
\fi



\bibliographystyle{IEEEtran}
\bibliography{IEEEabrv,VLCrefs}

\begin{thebibliography}{10}
\providecommand{\url}[1]{#1}
\csname url@samestyle\endcsname
\providecommand{\newblock}{\relax}
\providecommand{\bibinfo}[2]{#2}
\providecommand{\BIBentrySTDinterwordspacing}{\spaceskip=0pt\relax}
\providecommand{\BIBentryALTinterwordstretchfactor}{4}
\providecommand{\BIBentryALTinterwordspacing}{\spaceskip=\fontdimen2\font plus
\BIBentryALTinterwordstretchfactor\fontdimen3\font minus
  \fontdimen4\font\relax}
\providecommand{\BIBforeignlanguage}[2]{{%
\expandafter\ifx\csname l@#1\endcsname\relax
\typeout{** WARNING: IEEEtran.bst: No hyphenation pattern has been}%
\typeout{** loaded for the language `#1'. Using the pattern for}%
\typeout{** the default language instead.}%
\else
\language=\csname l@#1\endcsname
\fi
#2}}
\providecommand{\BIBdecl}{\relax}
\BIBdecl

\bibitem{cisco2012indoor}
\emph{Cisco Service Provider {Wi}-{Fi}: A Platform for Business Innovation and
  Revenue Generation}, 2012.

\bibitem{tsonev2015rate}
D.~Tsonev, S.~Videv, and H.~Haas, ``Towards a 100 {G}b/s visible light wireless
  access network,'' \emph{Optics express}, vol.~23, no.~2, pp. 1627--1637,
  2015.

\bibitem{buzzi2016survey}
S.~Buzzi, I.~Chih-Lin, T.~E. Klein, H.~V. Poor, C.~Yang, and A.~Zappone, ``A
  survey of energy-efficient techniques for 5{G} networks and challenges
  ahead,'' \emph{IEEE Journal on Selected Areas in Communications}, vol.~34,
  no.~4, pp. 697--709, 2016.

\bibitem{wang2015dynamic}
Y.~Wang and H.~Haas, ``Dynamic load balancing with handover in hybrid {Li}-{Fi}
  and {Wi}-{Fi} networks,'' \emph{IEEE/OSA Journal of Lightwave Technology},
  vol.~33, no.~22, pp. 4671--4682, 2015.

\bibitem{wang2017optimization}
Y.~Wang, D.~A. Basnayaka, X.~Wu, and H.~Haas, ``Optimization of load balancing
  in hybrid {Li}{Fi}/{R}{F} networks,'' \emph{IEEE Transactions on
  Communications}, vol.~65, no.~4, pp. 1708--1720, 2017.

\bibitem{chiu2000predictiveblocking}
M.-H. Chiu and M.~A. Bassiouni, ``Predictive schemes for handoff prioritization
  in cellular networks based on mobile positioning,'' \emph{IEEE Journal on
  Selected Areas in Communications}, vol.~18, no.~3, pp. 510--522, 2000.

\bibitem{li2016mobility}
L.~Li, Y.~Zhang, B.~Fan, and H.~Tian, ``Mobility-aware load balancing scheme in
  hybrid {V}{L}{C}-{L}{T}{E} networks,'' \emph{IEEE Communications Letters},
  vol.~20, no.~11, pp. 2276--2279, 2016.

\bibitem{wang2017mobility}
R.~Wang, J.~Zhang, S.~Song, and K.~B. Letaief, ``Mobility-aware caching in d2d
  networks,'' \emph{IEEE Transactions on Wireless Communications}, vol.~16,
  no.~8, pp. 5001--5015, 2017.

\bibitem{elazzouni2018qos}
S.~ElAzzouni, E.~Ekici, and N.~B. Shroff, ``Qos-aware predictive rate
  allocation over heterogeneous wireless interfaces,'' in \emph{Modeling and
  Optimization in Mobile, Ad Hoc, and Wireless Networks (WiOpt), 2018 16th
  International Symposium on}.\hskip 1em plus 0.5em minus 0.4em\relax IEEE,
  2018, pp. 1--8.

\bibitem{tse2001mobility}
M.~Grossglauser and D.~Tse, ``Mobility increases the capacity of ad-hoc
  wireless networks,'' in \emph{Proc., INFOCOM}, vol.~3.\hskip 1em plus 0.5em
  minus 0.4em\relax IEEE, 2001, pp. 1360--1369.

\bibitem{neely2010universal}
M.~J. Neely, ``Universal scheduling for networks with arbitrary traffic,
  channels, and mobility,'' in \emph{IEEE Conference on Decision and Control
  (CDC)}, 2010, pp. 1822--1829.

\bibitem{eryilmaz2017discounted}
A.~Eryilmaz and I.~Koprulu, ``Discounted-rate utility maximization (drum): A
  framework for delay-sensitive fair resource allocation,'' in \emph{Modeling
  and Optimization in Mobile, Ad Hoc, and Wireless Networks (WiOpt), 2017 15th
  International Symposium on}.\hskip 1em plus 0.5em minus 0.4em\relax IEEE,
  2017, pp. 1--8.

\bibitem{wang2018exploiting}
R.~Wang, J.~Zhang, S.~Song, and K.~B. Letaief, ``Exploiting mobility in
  cache-assisted {D}2{D} networks: Performance analysis and optimization,''
  \emph{IEEE Transactions on Wireless Communications}, 2018.

\bibitem{zhang2018anticipatory}
R.~Zhang, Y.~Cui, H.~Claussen, H.~Haas, and L.~Hanzo, ``Anticipatory
  association for indoor visible light communications: Light, follow me!''
  \emph{IEEE Transactions on Wireless Communications}, 2018.

\bibitem{stolyar2001largest}
A.~L. Stolyar and K.~Ramanan, ``Largest weighted delay first scheduling: Large
  deviations and optimality,'' \emph{Annals of Applied Probability}, pp. 1--48,
  2001.

\bibitem{jin2015resource}
F.~Jin, R.~Zhang, and L.~Hanzo, ``Resource allocation under delay-guarantee
  constraints for heterogeneous visible-light and {R}{F} femtocell,''
  \emph{IEEE Transactions on Wireless Communications}, vol.~14, no.~2, pp.
  1020--1034, 2015.

\bibitem{wu2003effective}
D.~Wu and R.~Negi, ``Effective capacity: a wireless link model for support of
  quality of service,'' \emph{IEEE Transactions on Wireless Communications},
  vol.~2, no.~4, pp. 630--643, 2003.

\bibitem{li2015cooperative}
X.~Li, R.~Zhang, and L.~Hanzo, ``Cooperative load balancing in hybrid visible
  light communications and {Wi}{Fi},'' \emph{IEEE Transactions on
  Communications}, vol.~63, no.~4, pp. 1319--1329, 2015.

\bibitem{kashef2016energy}
M.~Kashef, M.~Ismail, M.~Abdallah, K.~A. Qaraqe, and E.~Serpedin, ``Energy
  efficient resource allocation for mixed {R}{F}/{V}{L}{C} heterogeneous
  wireless networks,'' \emph{IEEE Journal on Selected Areas in Communications},
  vol.~34, no.~4, pp. 883--893, 2016.

\bibitem{chiang2007decompose}
M.~Chiang, S.~H. Low, A.~R. Calderbank, and J.~C. Doyle, ``Layering as
  optimization decomposition: A mathematical theory of network architectures,''
  \emph{Proceedings of the IEEE}, vol.~95, no.~1, pp. 255--312, 2007.

\bibitem{zabih2017experimental}
B.~Lin, X.~Tang, Z.~Ghassemlooy, C.~Lin, and Y.~Li, ``Experimental
  demonstration of an indoor {V}{L}{C} positioning system based on
  {O}{F}{D}{M}{A},'' \emph{IEEE Photonics Journal}, vol.~9, no.~2, pp. 1--9,
  2017.

\bibitem{zhang2014theoretical}
X.~Zhang, J.~Duan, Y.~Fu, and A.~Shi, ``Theoretical accuracy analysis of indoor
  visible light communication positioning system based on received signal
  strength indicator,'' \emph{IEEE/OSA Journal of Lightwave Technology},
  vol.~32, no.~21, pp. 3578--3584, 2014.

\bibitem{pakravan2001indoor}
M.~R. Pakravan and M.~Kavehrad, ``Indoor wireless infrared channel
  characterization by measurements,'' \emph{IEEE Transactions on Vehicular
  Technology}, vol.~50, no.~4, pp. 1053--1073, 2001.

\bibitem{ghassemlooy2017visible}
Z.~Ghassemlooy, L.~N. Alves, S.~Zvanovec, and M.-A. Khalighi, \emph{Visible
  light communications: theory and applications}.\hskip 1em plus 0.5em minus
  0.4em\relax CRC Press, 2017.

\bibitem{soltani2016access}
M.~D. Soltani, X.~Wu, M.~Safari, and H.~Haas, ``Access point selection in
  {Li}-{Fi} cellular networks with arbitrary receiver orientation,'' in
  \emph{Personal, Indoor, and Mobile Radio Communications (PIMRC), 2016 IEEE
  27th Annual International Symposium on}.\hskip 1em plus 0.5em minus
  0.4em\relax IEEE, 2016, pp. 1--6.

\bibitem{soltani2017handover}
M.~D. Soltani, H.~Kazemi, M.~Safari, and H.~Haas, ``Handover modeling for
  indoor {Li}-{Fi} cellular networks: The effects of receiver mobility and
  rotation,'' in \emph{Wireless Communications and Networking Conference
  (WCNC), 2017 IEEE}.\hskip 1em plus 0.5em minus 0.4em\relax IEEE, 2017, pp.
  1--6.

\bibitem{komine2004fundamental}
T.~Komine and M.~Nakagawa, ``Fundamental analysis for visible-light
  communication system using {L}{E}{D} lights,'' \emph{IEEE Transactions on
  Consumer Electronics}, vol.~50, no.~1, pp. 100--107, 2004.

\bibitem{shao2016joint}
S.~Shao, A.~Khreishah, and I.~Khalil, ``Joint link scheduling and brightness
  control for greening {V}{L}{C}-based indoor access networks,'' \emph{Journal
  of Optical Communications and Networking}, vol.~8, no.~3, pp. 148--161, 2016.

\bibitem{kahn1997wireless}
J.~M. Kahn and J.~R. Barry, ``Wireless infrared communications,''
  \emph{Proceedings of the IEEE}, vol.~85, no.~2, pp. 265--298, 1997.

\bibitem{wang2015efficient}
F.~Wang, Z.~Wang, C.~Qian, L.~Dai, and Z.~Yang, ``Efficient vertical handover
  scheme for heterogeneous vlc-rf systems,'' \emph{Journal of Optical
  Communications and Networking}, vol.~7, no.~12, pp. 1172--1180, 2015.

\bibitem{van2012handoff}
B.~Van~Quang, R.~V. Prasad, and I.~Niemegeers, ``A survey on handoffs—lessons
  for 60 ghz based wireless systems,'' \emph{IEEE Communications Surveys \&
  Tutorials}, vol.~14, no.~1, pp. 64--86, 2012.

\bibitem{mo2000fair}
J.~Mo and J.~Walrand, ``Fair end-to-end window-based congestion control,''
  \emph{IEEE/ACM Transactions on networking}, vol.~8, no.~5, pp. 556--567,
  2000.

\bibitem{boyd2004convex}
S.~Boyd and L.~Vandenberghe, \emph{Convex optimization}.\hskip 1em plus 0.5em
  minus 0.4em\relax Cambridge university press, 2004.

\bibitem{hong1999group}
X.~Hong, M.~Gerla, G.~Pei, and C.-C. Chiang, ``A group mobility model for ad
  hoc wireless networks,'' in \emph{Proceedings of the 2nd ACM international
  workshop on Modeling, analysis and simulation of wireless and mobile
  systems}.\hskip 1em plus 0.5em minus 0.4em\relax ACM, 1999, pp. 53--60.

\bibitem{ashtiani2003mobility}
F.~Ashtiani, J.~A. Salehi, and M.~R. Aref, ``Mobility modeling and analytical
  solution for spatial traffic distribution in wireless multimedia networks,''
  \emph{IEEE Journal on Selected Areas in Communications}, vol.~21, no.~10, pp.
  1699--1709, 2003.

\bibitem{rhee2011levy}
I.~Rhee, M.~Shin, S.~Hong, K.~Lee, S.~J. Kim, and S.~Chong, ``On the
  {Levy}-walk nature of human mobility,'' \emph{IEEE/ACM Transactions on
  Networking}, vol.~19, no.~3, pp. 630--643, 2011.

\bibitem{camp2002survey}
T.~Camp, J.~Boleng, and V.~Davies, ``A survey of mobility models for ad hoc
  network research,'' \emph{Wireless communications and mobile computing},
  vol.~2, no.~5, pp. 483--502, 2002.

\bibitem{bettstetter2003node}
C.~Bettstetter, G.~Resta, and P.~Santi, ``The node distribution of the random
  waypoint mobility model for wireless ad hoc networks,'' \emph{IEEE
  Transactions on mobile computing}, vol.~2, no.~3, pp. 257--269, 2003.

\bibitem{bettstetter2004stochastic}
C.~Bettstetter, H.~Hartenstein, and X.~P{\'e}rez-Costa, ``Stochastic properties
  of the random waypoint mobility model,'' \emph{Wireless Networks}, vol.~10,
  no.~5, pp. 555--567, 2004.

\bibitem{song2010limits}
C.~Song, Z.~Qu, N.~Blumm, and A.-L. Barab{\'a}si, ``Limits of predictability in
  human mobility,'' \emph{Science}, vol. 327, no. 5968, pp. 1018--1021, 2010.

\bibitem{xiong2016cooperative}
J.~Xiong, Z.~Huang, K.~Zhuang, and Y.~Ji, ``A cooperative positioning with
  {Kalman} filters and handover mechanism for indoor microcellular visible
  light communication network,'' \emph{Optical Review}, vol.~23, no.~4, pp.
  683--688, 2016.

\bibitem{cheng2006multistep}
H.~Cheng, P.-N. Tan, J.~Gao, and J.~Scripps, ``Multistep-ahead time series
  prediction.'' in \emph{PAKDD}, vol. 3918.\hskip 1em plus 0.5em minus
  0.4em\relax Springer, 2006, pp. 765--774.

\bibitem{navidi2004stationary}
W.~Navidi and T.~Camp, ``Stationary distributions for the random waypoint
  mobility model,'' \emph{IEEE Transactions on Mobile Computing}, vol.~3,
  no.~1, pp. 99--108, 2004.

\bibitem{meyer2000matrix}
C.~D. Meyer, \emph{Matrix analysis and applied linear algebra}.\hskip 1em plus
  0.5em minus 0.4em\relax Siam, 2000, vol.~71.

\end{thebibliography}
%

%

%
%
%




\end{document}